\newcommand{\calI}{\mathcal{I}}
\newcommand{\calT}{\mathcal{T}}
\newcommand{\calG}{\mathcal{G}}
\newcommand{\calV}{\mathcal{V}}
\newcommand{\calE}{\mathcal{E}}
\newcommand{\calH}{\mathcal{H}}
\newcommand{\calZ}{\mathcal{Z}}
\newcommand{\calN}{\mathcal{N}}
\newcommand{\uh}{u}
\newcommand{\vast}{\bBigg@{4}}
\newcommand{\Vast}{\bBigg@{5}}
\newcommand{\braced}[1]{\left\lbrace #1 \right\rbrace}
\newcommand{\angled}[1]{\left\langle #1 \right\rangle}
\newcommand{\adj}{\mathrm{adj}}
\newcommand\numberthis{\addtocounter{equation}{1}\tag{\theequation}}
\begin{document}

\title{Dynamic Programming for One-Sided Partially Observable Pursuit-Evasion Games} 
\author{Karel Hor\'{a}k, Branislav Bo\v{s}ansk\'{y}\\
\texttt{\{horak,bosansky\}@agents.fel.cvut.cz}}
\institute{Department of Computer Science, Faculty of Electrical Engineering, \\Czech Technical University in Prague}

\maketitle
\begin{abstract}

Pursuit-evasion scenarios appear widely in robotics, security domains, and many other real-world situations. 
We focus on two-player pursuit-evasion games with concurrent moves, infinite horizon, and discounted rewards.
We assume that the players have a partial observability, however, the evader is given an advantage of knowing the current position of the units of the pursuer.
This setting is particularly interesting for security domains where a robust strategy, designed to maximize the utility in the worst-case scenario, is often desirable.
We provide, to the best of our knowledge, the first algorithm that provably converges to the value of a partially observable pursuit-evasion game with infinite horizon.
Our algorithm extends well-known value iteration algorithm by exploiting that (1) the value functions of our game depend only on position of the pursuer and the belief he has about the current position of the evader, and (2) that these functions are piecewise linear and convex in the belief space.
\end{abstract}

\section{Introduction}
Pursuit-evasion games appear in many scenarios in robotics and security domains~\cite{vidal2002probabilistic,chung2011search}, where a team of centrally controlled pursuing units (the \emph{pursuer}) aims to locate and capture the \emph{evader}, while the evader aims for the opposite.
We study this class of games and assume their discrete-time variant played on a finite graph.
We further assume that all units of both players move simultaneously, that the horizon of the game is infinite, the rewards are discounted over time with discount factor $\gamma \in [0,1)$, and that the players have only a partial information about the state of the world. 
Formally, such a game belongs to zero-sum partially observable stochastic games (POSGs).


We are interested in finding robust strategies of the pursuer against the worst-case evader.
Specifically, we assume that the evader knows the positions of the pursuing units and her only uncertainty is the strategy of the pursuer and the move that will be performed in the current time step.
Although in reality such perfectly informed adversary is rarely met, it is typical that the pursuer does not know what information is being revealed to the evader. Hence, in order to derive robust strategies (i.e. those that maximize pursuer's reward against \emph{any} type of the evader), it is natural to consider this kind of perfectly informed adversary.

We design the first algorithm that provably converges to the value of such one-sided partially observable pursuit-evasion games.
Moreover as the value converges, the strategies of the players converge to their optimal strategies as well.
This is in contrast to the existing approaches in robotics and security, where heuristic solutions without any optimality guarantees are used~\cite{vidal2002probabilistic,chung2011search}.


Our algorithm extends the well-known value iteration algorithm that is known to work for concurrent-moves stochastic games~\cite{shapley1953stochastic} as well as for partially observable models from decision theory -- Partially Observable Markov Decision Processes (POMDPs)~\cite{smallwood1973optimal,monahan1982state,pineau2003point,smith2012point}.
We adopt the methodology for POMDPs and show that one-sided pursuit-evasion games also allow us to define compactly represented value functions and thus to design a dynamic programming algorithm that iteratively improves values of game states over time and converges to the value of the game.
Specifically we show that the value functions (1) depend only on the position of the units of the pursuer and the belief he has about the possible position of the evader, but do \emph{not} depend on the history of moves, (2) these functions are piecewise linear and convex and thus can be, similarly to POMDPs, represented as a set of so called $\alpha$-vectors (Section~\ref{sec:shape}), and (3) we can design a dynamic-programming operator that provably converges to optimal value of the game when applied on these value functions (Section~\ref{sec:vi}).

We believe that our algorithm (accompanied with related theoretical results) is of comparable significance in this class of games as the full-backed value iteration algorithm in the class of POMDPs and it is a necessary first step towards designing practical scalable algorithms.

Due to the space constraints, most of the technical proofs can be found in the Appendix.

\subsection{Related Work}
A similar model with one-sided partial observability where one of the players has a perfect information was presented by McEneaney~\cite{mceneaney2004some}.
The author assumed that the player with perfect information knows the action the opponent plays at the current stage.
Due to the turn-based character of such game, the author considers only pure strategies.

The difference between our concurrent setup and the turn-based one is highlighted in Fig.~\ref{fig:mixed-required}. In the turn-based setting, the evader always observes the action played by the pursuer before making her action. This allows the evader to always move to a vertex that is unoccupied by the pursuer, hence preventing the pursuer from ever capturing him. On the other hand, in the concurrent setting, the only thing the evader can get to know is that the pursuer chooses every vertex with equal probability. She cannot anticipate the move of the pursuer and hence gets captured with probability 1/3 in the first round of the game.

Our setting with concurrent moves better corresponds to the real-world situations that occur in real time. 
As the evader does not know the action taken by the pursuer in the current stage, players may need to use randomized strategies.
However, allowing randomized strategies provides challenges in the design of the dynamic programming operator that we address in this paper.

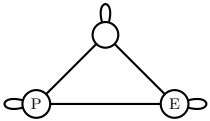
\begin{figure}[t]
\centering
\begin{tikzpicture}[,>=stealth,thick, scale=0.65, every node/.style={transform shape}, every loop/.style={}]
  \tikzstyle{circ} = [circle, draw, text width=15pt, text badly centered,
   node distance=2cm, inner sep=0pt]
  \node[circ] (topNode) {};
  \node[circ, below left of=topNode] (leftNode) {P};
  \node[circ, below right of=topNode] (rightNode) {E};
  \path[draw] (topNode)  -- (rightNode)
              (topNode)  -- (leftNode)
              (leftNode) -- (rightNode)
              (topNode)   edge[loop above] (topNode)
              (leftNode)  edge[loop left]  (leftNode)
              (rightNode) edge[loop right] (rightNode);
\end{tikzpicture}
\caption{A game with one unit of the pursuer (node marked with P) and the evader (node marked with E). Pure strategies are insufficient if players act simultaneously.}
\label{fig:mixed-required}
\end{figure}

Another model that uses one-sided partial observability was considered by Chatterjee et al.~\cite{chatterjee2014partial}, however with reachability and safety objectives (a player either wants to reach a set of target states or she wants to keep the system in a set of safe states) that do not translate to objectives with discounted rewards. 


An algorithm for solving a broader class of POSGs, where all players have imperfect information, was proposed in~\cite{hansen2004dynamic}, yet considering the finite horizon only.
The information is not shared among players, hence they typically attain different beliefs about the state of the game.
The algorithm uses ideas of dynamic programming to incrementally construct a set of pure strategies, longer ones from the shorter ones.
The set of such strategies is then used to form a normal-form representation of the POSG.
The use of iterated elimination of dominated strategies ensures that irrelevant strategies are not considered.
Nevertheless, in some games all strategies are relevant and this elimination does not help.
This is the case of the game in Fig.~\ref{fig:mixed-required} where the uniform play of the pursuer involves randomization over \emph{all} pure strategies considered by the algorithm;
the number of such strategies being exponential in the horizon.
The one-sided partial observability in our game allows us to avoid such enumeration of pure strategies, define dynamic programming over values of the subgames, and, in this particular case, represent the optimal \emph{infinite} horizon strategy using a single $\alpha$-vector.

\section{Finite-horizon game}
\label{sec:efg}
We use the notion of finite-horizon POSGs, or \emph{extensive-form games}, to reason about the infinite-horizon pursuit-evasion game with discounted rewards.
An~extensive-form game (EFG) is a tuple $G=(\calN,\calH,\calZ,\calT,u,\calI)$. $\calN$ is the set of players, in our case $\calN=\braced{p, e}$ where $p$ stands for the pursuer and $e$ for the evader.
Set $\calH$ denotes a finite set of \emph{histories} of actions taken by all players from the begining of the game.
Every history corresponds to a \emph{node} in the game tree; hence we use the terms history and node interchangeably.
Each of the histories may be either (1) \emph{terminal} ($h \in \calZ \subseteq \calH$) where the game ends and player $i$ gets utility $u_i(h)$, (2) controlled by the nature player who selects the successor node according to a fixed probability distribution known to all players, or (3) one of the players from $\calN$ may be to act.
We consider a zero-sum scenario where $u_p(h)=-u_e(h)$. To simplify the notation we use $u(h)$ to denote pursuer's reward.
An ordered list of transitions of player $i$ from root to node $h$ is referred to as a player $i$'s \emph{sequence}.
The allowed transitions in the game are modelled using a \emph{transition function} $\calT$ that provides a set of successor nodes for each non-terminal history.
The imperfect observation of players is modelled via \emph{information sets} $\calI_i$ that form a partition over histories $h$ where player $i \in \calN$ takes action.
We assume perfect recall setting where the players never forget their past actions, i.e. for every $I_i \in \calI_i$, all histories $h \in I_i$ have the same player $i$'s sequence.
Each information set $I_i \in \calI_i$ corresponds to one decision point of player $i$. 
A randomized \emph{behavioral strategy} of player $i$ assigns a distribution over actions to each of the information sets in~$\calI_i$.
A behavioral strategy of player $i$ can be represented in the form of a \emph{realization plan} $r$ which assigns probability of playing sequence $\sigma_i$ to each player $i$'s sequence $\sigma_i$. The behavioral strategy at information set $I_i \in \calI_i$ reached using a sequence $\sigma_i$ is then $b(I_i,a)=r(\sigma_i a) / r(\sigma_i)$.
A Nash equilibrium (NE) in an EFG is a pair of behavioral strategies, in which each player plays a best response to the strategy of the opponent. The expected utility of the pursuer when NE strategies are played by both players is referred to as the \emph{value of the game}.

We will now use this terminology to construct an EFG for a finite-horizon version of a pursuit-evasion game with $N$ pursuing units played on a graph $\calG=\angled{\calV,\calE}$ for $t$ rounds (we term $t$ as the \emph{horizon}). Part of the game tree is shown in Fig.~\ref{fig:efg}.
At every round $\tau \leq t$, pursuer's units occupy vertices $s_p^\tau$, where $s_p^\tau=\braced{s_{p,1}^\tau,\ldots,s_{p,N}^\tau}$ is an $N$-element multiset of vertices of $\calG$, and the evader is located in vertex $s_e^\tau \in \calV$.
The goal of the pursuer is to achieve a situation where the evader is caught, i.e. $s_e^\tau \in s_p^\tau$.
In every round, players have to move their units to vertices adjacent to their current positions ($\adj(v)$ denotes the set of vertices adjacent to $v$). Position of the evader in round $\tau+1$ is thus $s_e^{\tau+1} \in \adj(s_e^\tau)$.
We overload the operator $\adj$ to apply it also on multisets representing positions of pursuer's units, i.e. $s_p^{\tau+1} \in \adj(s_p^\tau)$, where $\adj(s_p^\tau) = \times_{i=1 \ldots N} \adj(s_{p,i}^\tau)$.

A horizon-$t$ game $G^t \!\!\angled{s_p^0,b^0}$ is parametrized by the initial position of the pursuer $s_p^0 \in \calV^N$ and a distribution over evader's initial positions $b^0 \in \Delta(\calV)$ known to both players (we term $b^0$ the initial \emph{belief}).
The game starts with a chance move selecting the initial position of the evader $s_e^0$ (based on $b^0$).

\begin{figure}[t]
\centering \includegraphics[scale=0.65]{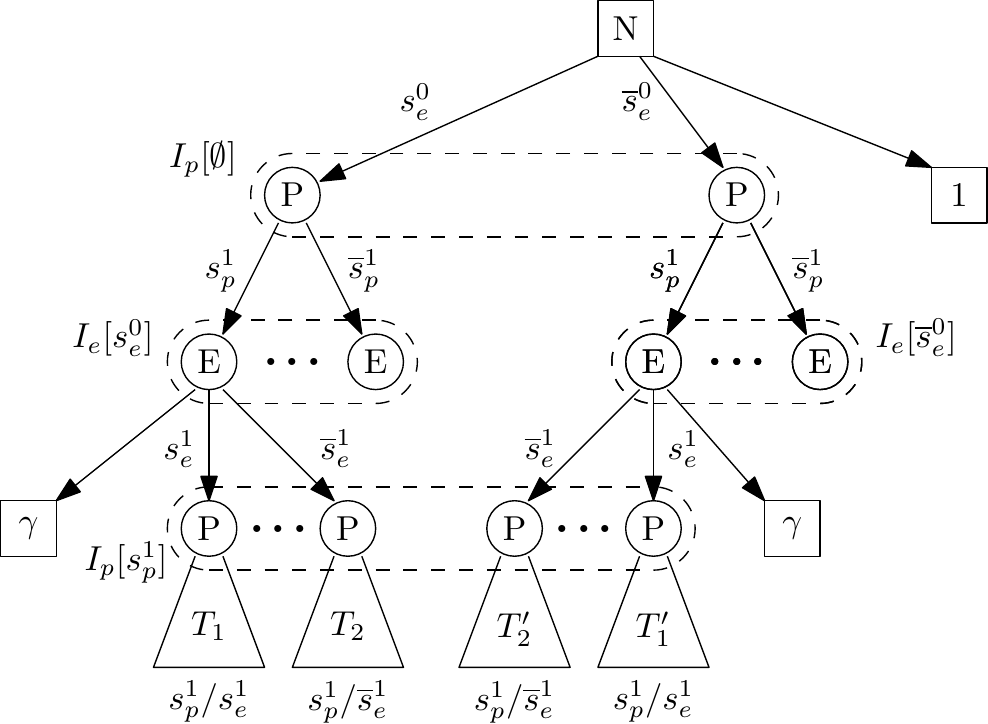}
\caption{An EFG representation of a finite-horizon pursuit-evasion game.}
\label{fig:efg}
\end{figure}

A history $h \in \calH$ in a game with horizon $t$ corresponds to a list of positions $s_e^0 s_p^1 s_e^1 \cdots s_p^\tau s_e^\tau$, where $\tau \leq t$.
The utility values are assigned to terminal histories as follows: in case the pursuer failed to capture the evader in time, i.e. if $\tau = t$ and $s_e^\tau \not\in s_p^\tau$, the pursuer gets utility $u(h)=0$; if he successfully captured the evader in the time limit $t$, i.e. if $\tau \leq t$ and $s_e^\tau \in s_p^\tau$, the pursuer gets the reward $u(h)=\gamma^\tau$ for capturing the evader in $\tau$ rounds (where $\gamma \in [0,1)$ stands for the \emph{discount factor}). 
The transition function $\calT$ complies with the graph (i.e., the adjacency function $\adj$), hence $s_p^\tau \in \adj(s_p^{\tau-1})$ and $s_e^\tau \in \adj(s_e^{\tau-1})$ for every $\tau \geq 1$. For notational simplicity we denote the sequence of pursuer's actions $s_p^1 \cdots s_p^\tau$ in $h$ as $h|_p$ and the sequence of evader's actions $s_e^1 \cdots s_e^\tau$ in $h$ as $h|_e$.

The position of the evader is unknown to the pursuer. Hence, in a perfect recall game, there is one pursuer's information set $I_p [\sigma_p]$ for each of his sequences $\sigma_p$ where $I_p[\sigma_p]=\braced{h' \;|\;\, h' \! \in \! \calH \! \setminus \! \calZ \! : h'|_p \! = \! \sigma_p}$.

Evader on the other hand knows the game situation almost perfectly. She knows where the pursuer's units were located \emph{before} the pursuer acted in the current round of the game (recall that we assume that pursuer acts first). The only information missing to the evader is the action being taken by the pursuer in the current round. Hence, for every history $h=s_e^0 s_p^1 s_e^1 \cdots s_p^\tau s_e^\tau$ where the pursuer is to play, there is evader's information set $I_e[h] = \braced{h s_p^{\tau+1} | s_p^{\tau+1} \in \adj(s_p^\tau)}$ containing all possible continuations of the pursuer.

\subsection{Shape of the value function}
\label{sec:shape}

The size of the extensive-form game representation and associated behavioral strategies grows exponentially as the horizon increases. This makes it quickly impossible to apply standard algorithms operating on game trees, especially since we aim to solve \emph{infinite} horizon games.

We aleviate the problem of increasing complexity of the strategy representation by representing them \emph{only} as their values.
We show that the value of a~strategy is linear in the belief, and we can thus represent it using just $|\calV|$ real numbers.
Moreover we show that there is a finite set of behavioral strategies that needs to be considered in a game with arbitrary finite horizon, regardless of the initial belief, and thus the value function representing the value of the best strategy at every belief is piecewise linear and convex, which allows us to represent this function in a compact manner.


\begin{definition}
A \emph{value function} $v^t\!\angled{s_p^0}: \Delta(\calV) \rightarrow [0,1]$ is a function assigning the value $v^t\!\angled{s_p^0}(b^0)$ of the game $G^t\!\angled{s_p^0,b^0}$ to every initial belief $b^0$ about the position of the evader.
By $v^t$ we mean a set of value functions $v^t\!\angled{s_p^0}$, one for each initial position $s_p^0 \in \calV^N$ of the pursuer.
\end{definition}

In the following text we show that a value function $v^t\!\angled{s_p^0}$ is piecewise linear and convex (PWLC) in the belief for every finite horizon $t$. For notational simplicity, the term linear will be used to refer to an affine function as well. The proof is structured as follows:
(1) first of all we show that the expected utility of every strategy of the pursuer is linear in the belief,
next (2) there is a finite set of behavioral strategies $\Sigma^t\!\angled{s_p^0}$ applicable in games $G^t\!\angled{s_p^0,\cdot\,}$ such that for every initial belief $b^0 \in \Delta(\calV)$ the pursuer has about the position of the evader, at least one strategy in $\Sigma^t\!\angled{s_p^0}$ is a NE solution of the game $G^t\!\angled{s_p^0,b^0}$; and
finally (3) we show that the PWLC nature of the value function follows from (1) and (2).

\begin{lemma}
\label{lemma:v:mixed-strategy-linear}
Let $\sigma_p$ be a randomized behavioral strategy of the pursuer in games $G^t\!\angled{s_p^0,b^0}$, where the pursuer starts in vertices $s_p^0$, parametrized by the initial belief about the position of the evader $b^0$. The expected utility of playing $\sigma_p$ against a best responding opponent is linear in $b^0$.
\end{lemma}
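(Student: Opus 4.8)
The plan is to fix the pursuer's strategy $\sigma_p$ and to track the only quantity that varies with the belief, namely the initial chance move that draws the evader's starting vertex $s_e^0$ from $b^0$. The first observation I would record is that the pursuer's behavior is completely decoupled from the evader: since every pursuer information set $I_p[\sigma_p]$ is determined solely by the pursuer's own sequence, the distribution over pursuer trajectories $s_p^1 \cdots s_p^t$ induced by $\sigma_p$ is a fixed stochastic process, independent of both $b^0$ and of the evader's play. Thus, from the evader's point of view, $\sigma_p$ merely fixes a stochastic environment in which she must avoid capture.

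Next I would exploit the one-sided information structure to argue that a best-responding evader can optimize \emph{independently} for each realized starting vertex. Because each evader information set $I_e[h]$ leaves her uncertain only about the pursuer's current simultaneous move, the evader always knows her entire past, and in particular her initial position $s_e^0$; hence no evader information set ever merges two histories with distinct $s_e^0$. Consequently the evader's information sets partition according to $s_e^0$, and a behavioral strategy $\sigma_e$ splits into independent pieces, one attached to the information sets reachable from each starting vertex.

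With this in hand, I would define $V_{\sigma_p}(s_e^0)$ to be the expected pursuer utility when the evader best-responds to $\sigma_p$ given that her starting vertex is known to be $s_e^0$; by the decoupling above this number does not depend on $b^0$. Writing $U(\sigma_e \mid s_e^0)$ for the expected pursuer utility of the fixed $\sigma_p$ against $\sigma_e$ conditioned on the start $s_e^0$, the value of $\sigma_p$ against a best-responding evader is $\min_{\sigma_e} \sum_{s_e^0 \in \calV} b^0(s_e^0)\, U(\sigma_e \mid s_e^0)$. Since each summand $U(\sigma_e \mid s_e^0)$ depends only on the part of $\sigma_e$ attached to information sets associated with $s_e^0$, the minimum commutes with the sum, and the value reduces to $\sum_{s_e^0 \in \calV} b^0(s_e^0)\, V_{\sigma_p}(s_e^0)$, which is affine — hence linear in the sense adopted in this paper — in $b^0$.

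The step I expect to be the crux is this commutation of the minimization with the expectation over $s_e^0$, and it is precisely the place where one-sided observability is essential. If the evader did not observe $s_e^0$ she would have to commit to a single response, turning the value into a pointwise minimum of linear functions, i.e. a concave rather than linear function of $b^0$. The careful part of the argument is therefore to use perfect recall together with the fact that the evader always observes her initial position to conclude that her optimal play from one starting vertex neither constrains nor interacts with her play from another, so that the independently chosen per-vertex optima can be assembled into one globally optimal best response without loss.
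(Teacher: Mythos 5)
Your proof is correct and follows essentially the same route as the paper's: since $\sigma_p$ is fixed and the evader observes her own starting vertex, her best response decomposes into independent per-vertex evasive plans with belief-independent values $V_{\sigma_p}(s_e^0)$, so the overall value is the convex combination $\sum_{s_e^0} b^0(s_e^0)\, V_{\sigma_p}(s_e^0)$, which is affine in $b^0$. You simply spell out in more detail the two facts the paper leaves implicit, namely that the pursuer's behavior is decoupled from the evader's play and that the minimization commutes with the expectation over starting vertices.
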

\begin{proof}
As the strategy $\sigma_p$ of the pursuer is fixed and the actual evader's position $s_e^0$ is revealed to the evader, it is not necessary from her perspective to consider the initial belief $b^0$ to derive the optimal evasive plan. She therefore chooses the optimal evasive plan (minimizing expected pursuer's utility) for every her possible initial position, value of each such plan being a constant. The initial belief $b^0$ forms a convex combination of values of individual evasive plans, which is a linear function in the belief space. \qed
\end{proof}

\begin{theorem}
\label{thm:finite-mixed-strategies}
Let $G^t\!\angled{s_p^0,b^0}$ be a horizon-$t$ game parametrized by the initial belief $b^0$ where the pursuer starts in a set of vertices $s_p^0$. There exists a finite set of pursuer's behavioral strategies $\Sigma^t \!\angled{s_p^0}$ such that for every initial belief $b^0$ about the position of the evader, the set $\Sigma^t \!\angled{s_p^0}$ contains at least one strategy $\sigma_p \in \Sigma^t \!\angled{s_p^0}$ that is in Nash equilibrium of $G^t\!\angled{s_p^0,b^0}$.
\end{theorem}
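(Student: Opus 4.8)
The plan is to prove the statement by induction on the horizon $t$, assembling $\Sigma^t\angled{s_p^0}$ from the finite sets guaranteed for shorter horizons. The base case $t=0$ is immediate: the pursuer has no move, the value equals the probability $\sum_{w \in s_p^0} b^0(w)$ that the evader is already captured, and a single trivial strategy realises it for every belief. For the inductive step I would assume the claim for horizon $t-1$ and every starting position. Combined with Lemma~\ref{lemma:v:mixed-strategy-linear}, this means each subgame value function $v^{t-1}\angled{s_p'}$ is the pointwise maximum of the finitely many linear functions induced by the strategies in $\Sigma^{t-1}\angled{s_p'}$; I will write $\Gamma^{t-1}\angled{s_p'}$ for the corresponding finite set of coefficient vectors, one per strategy, the component at vertex $w$ being the value of that strategy when the evader starts in $w$.

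Next I would decompose the horizon-$t$ game by its first round. A pursuer strategy is described by a distribution $\pi$ over first moves $s_p^1 \in \adj(s_p^0)$ together with, for each such $s_p^1$, a continuation strategy played in the subgame starting from pursuer position $s_p^1$. By the inductive hypothesis and the principle of optimality, drawing each continuation from the finite set $\Sigma^{t-1}\angled{s_p^1}$ is without loss of generality for attaining the game value, since whatever belief reaches that subgame, $\Sigma^{t-1}\angled{s_p^1}$ already contains a strategy optimal there. Fixing such a continuation assignment $A$ and a distribution $\pi$, the whole strategy is now fixed, so Lemma~\ref{lemma:v:mixed-strategy-linear} tells me its value against a best-responding evader is linear in $b^0$, with coefficient vector $\alpha_{(\pi,A)}$ computed by a one-step backup. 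Here the one-sided information is essential: the evader knows its current vertex $w$ and, on entering any subgame, the pursuer's new position, so its best response decomposes per starting vertex and a pure per-vertex response is optimal against the fixed pursuer strategy. Writing $\alpha^{s_p^1}\in\Gamma^{t-1}\angled{s_p^1}$ for the vector of the chosen continuation, the backup reads
\[
\alpha_{(\pi,A)}(w)=\min_{w' \in \adj(w)}\;\sum_{s_p^1}\pi(s_p^1)\big[\mathbf{1}[w' \in s_p^1] + \gamma\,\mathbf{1}[w' \notin s_p^1]\,\alpha^{s_p^1}(w')\big],
\]
up to the global power of $\gamma$ recovered when the recursion is unfolded.

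The remaining and principal difficulty is that $\pi$ ranges over a continuum, so a priori the family $\{\alpha_{(\pi,A)}\}$ is infinite. I would resolve this by fixing $A$ (there are only finitely many assignments, each factor $\Sigma^{t-1}\angled{s_p^1}$ being finite) and observing that $\pi \mapsto \alpha_{(\pi,A)}$ is continuous and piecewise affine on the simplex: the simplex splits into finitely many polyhedral regions on which, for every $w$, a single $w'$ attains the inner minimum, and on each region the map is affine. Hence the image $\{\alpha_{(\pi,A)} : \pi\}$ is a finite union of polytopes, and its convex hull is a single polytope $P_A$. For every $b^0$ the best value attainable under assignment $A$ is the support function $\max_\pi \angled{\alpha_{(\pi,A)},b^0}$, which is attained at an extreme point of $P_A$; since $P_A$ has finitely many extreme points, each realised by some $\pi$, only finitely many first-move distributions per $A$ are needed.

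Finally I would let $\Sigma^t\angled{s_p^0}$ consist of the finitely many strategies $(\pi,A)$ obtained by ranging $A$ over all continuation assignments and $\pi$ over the finitely many vertices identified above. For every belief $b^0$, taking the maximum over this finite set reproduces $v^t\angled{s_p^0}(b^0)$, the value of the game; the maximising strategy thus guarantees the game value against any evader, so it is a maximin strategy of the pursuer and, paired with the evader's best response, forms a Nash equilibrium of $G^t\angled{s_p^0,b^0}$. The main obstacle throughout is exactly the reduction from the continuum of mixed first moves to a finite set, which the convex-hull and support-function argument settles; a secondary point that needs care is justifying, via the one-sided information structure, that the evader's best response decomposes per vertex so that the one-step backup above is exact.
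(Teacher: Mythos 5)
Your induction has a genuine gap at the step where you restrict the pursuer's continuations to \emph{pure} elements of $\Sigma^{t-1}\!\angled{s_p^1}$. The justification you give (``whatever belief reaches that subgame, $\Sigma^{t-1}\!\angled{s_p^1}$ already contains a strategy optimal there'') is circular in a game, as opposed to a POMDP: the belief that reaches $I_p[s_p^1]$ is induced by the evader's best response, which in turn depends on which continuations the pursuer has committed to. A continuation tailored to the equilibrium belief can be punished by a first-round deviation of the evader that steers play toward a belief where that continuation is poor.

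More fundamentally, pure assignments $A$ do not suffice, so your claimed equality $\max_{(\pi,A)}\angled{\alpha_{(\pi,A)},b^0} = v^t\!\angled{s_p^0}(b^0)$ can fail. The value of a strategy whose continuation after $s_p^1$ has value vector $c[s_p^1]$ is a minimum over evader responses of functions affine in these vectors, hence \emph{concave} in them; its maximum over the polytope $\times_{s_p^1}\mathrm{conv}\big(\Gamma^{t-1}\!\angled{s_p^1}\big)$ is in general attained in the interior, not at a vertex, i.e.\ not at a pure assignment. Concretely, the induction hypothesis only guarantees that $\Sigma^{t-1}\!\angled{s_p^1}$ covers every belief: it may legitimately consist of two strategies whose value vectors over two evader positions $x,y$ are $(1,0)$ and $(0,1)$ (each optimal on its half of the belief simplex), while an achievable behavioral $50/50$ mixture of them has value vector $(\nicefrac12,\nicefrac12)$. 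If the evader can reach both $x$ and $y$ in the first round, she best-responds to either pure continuation and drives its contribution to $0$, whereas the mixture guarantees $\nicefrac{\gamma}{2}$; the maximin strategy must therefore randomize over continuations, and no pure $(\pi,A)$ reproduces this, since randomization \emph{conditional on} $s_p^1$ cannot be pushed into the first-move distribution $\pi$. This is the phenomenon of the paper's Fig.~\ref{fig:mixed-required}, one level up. The fix is to randomize jointly over (first move, continuation), i.e.\ parametrize strategies by a distribution $\hat\pi(s_p^1,\alpha)$ over pairs --- exactly the variables of the paper's LP~\eqref{lp:pursuer} --- together with a domination lemma stating that every achievable continuation value vector is componentwise dominated by a convex combination of vectors from $\Gamma^{t-1}\!\angled{s_p^1}$ (a separating-hyperplane argument); your convex-hull/support-function reduction, which is sound, then goes through in the $\hat\pi$ variables. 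Note that the paper sidesteps all of this with a direct, non-inductive argument: it works with realization plans of the full horizon-$t$ tree in the sequence-form LP, where cross-round randomization is built in, and gets finiteness from the extreme points of the LP polytope after factoring the belief $b^0$ out of the objective.
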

\begin{proof}
We use the sequence-form linear program for solving EFGs~\cite{koller1996efficient} to reason about the set of strategies $\Sigma^t \!\angled{s_p^0}$ the pursuer has to consider.
In this LP, values in every information set of the evader, as well as the value $v(root)$ in the root node of the game tree, are computed in a bottom-up fashion.
Every such value $v(I_e)$ of an information set $I_e$ can be seen as a concave piecewise linear function in the space of pursuer's realization plans (a compact representation of his behavioral strategies).
The pursuer then seeks for a realization plan that maximizes $v(root)$; the maximizer of which can be found among extreme points of line segments of $v(root)$, i.e. vertices of a polytope bounded by this function~\cite{vanderbei2014linear}.
We show that the set of such extreme points does not depend on the initial belief $b^0$.


There is one information set $I_e [ s_e^0 ]$ of the evader for each of her initial positions $s_e^0$.
The utility of every terminal node in the subgame beneath $I_e[s_e^0]$ is multiplied with chance probability $b(s_e^0)$, which allows us to factor out this probability and obtain the following constraint for the root node:
{\small
\begin{equation}
v(root) \leq \sum_{s_e^0 \in s_p^0} b^0(s_e^0) + \sum_{s_e^0 \in \calV \setminus s_p^0} b^0(s_e^0) \cdot \hat v(I_e [ s_e^0 ])
\end{equation}}

\vspace*{-1em}

Value $v(root)$ is a convex combination of concave piecewise linear functions $\hat v(I_e [ s_e^0 ])$.
As the belief was factored out, these functions, as well as the finite set of their extreme points $P[s_e^0]$, no longer depend on the belief.
This convex combination with arbitrary coefficients $b^0$ cannot have an extreme where none of the functions $\hat v(I_e [s_e^0])$ has one.
The set of extreme points is therefore a subset of $\bigcup_{s_e^0} P[s_e^0]$ --- a finite set that does not depend on the belief.
Each of the extreme points in $\bigcup_{s_e^0} P[s_e^0]$ corresponds to one pursuer's realization plan, and thus one his behavioral strategy, which allows us to construct the finite set $\Sigma^t \!\angled{s_p^0}$. \qed

\end{proof}

\begin{theorem}
\label{thm:convexity}
Value function $v^t\!\angled{s_p^0}$ is piecewise linear and convex in the belief space.
\end{theorem}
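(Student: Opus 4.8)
The plan is to exhibit $v^t\angled{s_p^0}$ as the pointwise maximum of a \emph{finite} family of affine functions and then appeal to the elementary fact that such a maximum is piecewise linear and convex.

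First I would fix the pursuer's starting position $s_p^0$ and, for each behavioral strategy $\sigma_p$ of the pursuer, write $f_{\sigma_p}(b^0)$ for the expected pursuer utility obtained by committing to $\sigma_p$ against a best-responding evader in $G^t\angled{s_p^0,b^0}$. Because the game is zero-sum with perfect recall, the minimax characterization of the value applies, and the value of $G^t\angled{s_p^0,b^0}$ equals the pursuer's security level; by definition of the value function this gives $v^t\angled{s_p^0}(b^0)=\max_{\sigma_p} f_{\sigma_p}(b^0)$, where the maximum ranges over \emph{all} behavioral strategies of the pursuer.

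The central step is to replace this maximum over an infinite strategy space by a maximum over the finite set $\Sigma^t\angled{s_p^0}$ supplied by Theorem~\ref{thm:finite-mixed-strategies}. On one hand, restricting the pool of strategies can only decrease the maximum, so $\max_{\sigma_p \in \Sigma^t\angled{s_p^0}} f_{\sigma_p}(b^0) \le v^t\angled{s_p^0}(b^0)$. On the other hand, for each $b^0$ Theorem~\ref{thm:finite-mixed-strategies} guarantees a strategy $\sigma_p^\star \in \Sigma^t\angled{s_p^0}$ that is in Nash equilibrium of $G^t\angled{s_p^0,b^0}$; such an equilibrium strategy attains the value against a best-responding evader, so $f_{\sigma_p^\star}(b^0) = v^t\angled{s_p^0}(b^0)$, yielding the reverse inequality. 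Combining the two bounds gives
\[
v^t\angled{s_p^0}(b^0) = \max_{\sigma_p \in \Sigma^t\angled{s_p^0}} f_{\sigma_p}(b^0).
\]

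To finish, Lemma~\ref{lemma:v:mixed-strategy-linear} tells us that each $f_{\sigma_p}$ is affine (linear) in $b^0$, so $v^t\angled{s_p^0}$ is the pointwise maximum of finitely many affine functions and is therefore piecewise linear and convex. I expect the middle step to be the main obstacle: one must justify cleanly that the equilibrium strategy promised by Theorem~\ref{thm:finite-mixed-strategies} is genuinely a maximizer of the pursuer's security level, i.e. that the Nash-equilibrium value coincides with $\max_{\sigma_p} f_{\sigma_p}(b^0)$. This identification rests on the minimax property of zero-sum perfect-recall games; once it is in place, the convexity and piecewise-linearity are immediate from the finite-maximum-of-affine-functions structure.
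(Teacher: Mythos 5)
Your proposal is correct and follows essentially the same route as the paper: the paper's proof likewise combines Lemma~\ref{lemma:v:mixed-strategy-linear} (each strategy's value against a best response is linear in the belief) with Theorem~\ref{thm:finite-mixed-strategies} (a finite set of strategies suffices at every belief) to write $v^t\!\angled{s_p^0}$ as a pointwise maximum of finitely many affine functions, hence PWLC. Your version merely spells out the minimax identification of the Nash value with the pursuer's security level, which the paper leaves implicit.
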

\begin{proof}
This result directly follows from Lemma~\ref{lemma:v:mixed-strategy-linear} and Theorem~\ref{thm:finite-mixed-strategies}. There is a finite set of randomized strategies $\Sigma^t \!\angled{s_p^0}$ that has to be considered by the pursuer and value of each such strategy is linear in the belief space. Thus the value function $v^t\!\angled{s_p^0}$ is a pointwise maximum taken over a finite set of linear functions, which is a PWLC function in the belief space. \qed
\end{proof}

Every PWLC function can be represented as a finite set of \emph{$\alpha$-vectors}. Every such $\alpha$-vector $\alpha=(\alpha_1,\ldots,\alpha_{|\calV|})$ represents one of the affine functions by assigning an expected reward $\alpha_i$ to each of the pure beliefs. We will often be working with the $\alpha$-vector representation of the value function, hence we overload the notation and consider value functions also as sets of such $\alpha$-vectors.

Lemma~\ref{lemma:v:mixed-strategy-linear} and Theorem~\ref{thm:finite-mixed-strategies} imply that every linear segment of the PWLC value function corresponds to one randomized strategy of the pursuer. This is similar to the POMDP case where each $\alpha$-vector corresponds to one conditional plan. This allows us to use terms $\alpha$-vector and pursuer's strategy interchangeably.

\section{Value iteration}
\label{sec:vi}
In the previous section, we related the concept of the value functions to the EFG representation of the game and proved that these functions have desirable properties (they are piecewise linear and convex).
We leverage their representation to design a dynamic programming approach inspired by value iteration algorithms for either POMDPs \cite{smallwood1973optimal,monahan1982state} or perfect information stochastic games \cite{shapley1953stochastic}.

The algorithm inductively constructs a sequence of value functions $\lbrace v^t \rbrace_{t=0}^\infty$, starting with values of a horizon-$0$ game (where the utility of the pursuer depends solely on the fact whether the evader starts in a vertex where one of the pursuer's units is located).

We avoid using the exponentially-sized representation of the underlying EFG by computing value function of a horizon-$t$ game using the solution of the game with horizon $t\!-\!1$.
First of all we show that there is a well-defined value update formula that expresses values of $v^t$ using value functions $v^{t-1}$ (Theorem~\ref{thm:value-update}).
We let the players choose their strategies for the first round of the horizon-$t$ game using the maximin principle (we term such strategies \emph{one-step strategies}) and we show that the pursuer can use these strategies to update his belief.
Pursuer's one-step strategy $\pi_p$ is a distribution over possible actions of his units, $\pi_p \in \adj(s_p^0)$, from which he samples his action.
The evader acts similarly, however she conditions her decision on her \emph{true} position $s_e^0$ (not just on the overall belief available to the pursuer); her one-step strategy is thus a mapping $\pi_e\!:\! \calV \rightarrow \Delta(\calV)$, such that $\pi_e(s_e^0)$ assigns zero probability to vertices not adjacent to $s_e^0$.

The piecewise linearity and convexity of value functions have implications on the computation of value functions. Firstly it allows us to find optimal one-step strategies by means of linear programming (Section~\ref{sec:lp}), furthermore it makes it possible to avoid evaluating the value update formula in every point in the belief space when constructing new value functions. Instead it is possible to construct new value functions incrementally and focus only on beliefs where the function being constructed has its extreme points of line segments (Section~\ref{sec:computing}). We conclude by showing that the dynamic programming operator used in the value iteration algorithm has a unique fixpoint corresponding to value functions of an infinite horizon game which ensures convergence properties of the algorithm (Theorem~\ref{thm:convergence}).

\begin{theorem}
\label{thm:value-update}
The value of the game $G^t \!\angled{s_p^0,b}$ can be computed from the solutions of horizon-$(t\!-1\!)$ games, whose values are represented by a set of value functions $v^{t-1}$. It holds that
\vspace*{-0.5em}
{\small
\begin{equation}
v^t\!\angled{s_p^0}(b) = \sum_{s_e \in s_p^0} \! b(s_e) + \gamma \left[ \sum_{s_e \in \calV \setminus s_p^0} \!\!\!\! b(s_e) \right] \cdot \max_{\pi_p} \min_{\pi_e} \! \sum_{s_p^1 \in \calV^N} \!\! \pi_p(s_p^1) \cdot v^{t-1}\!\angled{s_p^1}(b_{\pi_e})
\label{eq:value-update}
\end{equation}
}

\vspace*{-0.9em}

\noindent where the transformed belief $b_{\pi_e}$ depends solely on the evader's one-step strategy $\pi_e$ and the parametrization of the game $G^t \!\angled{s_p^0,b}$:
\vspace{-0.5em}
{\small
\begin{equation}
b_{\pi_e}(s_e') = \frac{1}{\sum_{s_e \in \calV \setminus s_p^0} b(s_e)} \sum_{s_e \in \calV \setminus s_p^0} \!\!\! b(s_e) \cdot \pi_e(s_e,s_e') \label{eq:belief}
\end{equation}
}

\vspace*{-1em}

\noindent The computation of $v^t$ by means of Equation~\eqref{eq:value-update} forms a dynamic programming operator $H$, such that $v^t=Hv^{t-1}$.
\end{theorem}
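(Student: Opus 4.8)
The plan is to prove the identity by backward induction on the extensive-form structure of $G^t\angled{s_p^0,b}$, collapsing the horizon-$t$ game onto its horizon-$(t-1)$ subgames after the single first round of concurrent moves. First I would condition on the opening chance move that draws the evader's position $s_e^0$ according to $b$. This partitions the histories into two groups. For $s_e^0 \in s_p^0$ the evader is caught in round $0$ and the pursuer collects reward $\gamma^0=1$, which accounts for the term $\sum_{s_e \in s_p^0} b(s_e)$. For $s_e^0 \in \calV \setminus s_p^0$ the game continues; pulling the total continuation probability $\sum_{s_e \in \calV \setminus s_p^0} b(s_e)$ and a single discount factor $\gamma$ (for the one elapsed round) out in front isolates the normalized continuation value that the remaining maximin expression must equal.

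Next I would treat the first round as a zero-sum simultaneous-move game in which the evader holds private information about $s_e^0$. The pursuer has a single information set for his sequence, so he commits to one distribution $\pi_p$ over $s_p^1 \in \adj(s_p^0)$; the evader observes her true position $s_e^0$ but not the pursuer's current action, so she responds with a position-dependent $\pi_e(s_e^0,\cdot)$. The key structural step is to identify the subgame reached once the pursuer has moved to $s_p^1$ with the game $G^{t-1}\angled{s_p^1,b_{\pi_e}}$. Applying Bayes' rule to the pursuer's posterior --- starting from $b$, conditioning on the non-capture event $s_e^0 \notin s_p^0$, and propagating through $\pi_e$ --- yields exactly the belief $b_{\pi_e}$ of Equation~\eqref{eq:belief}. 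Because the moves are concurrent and $\pi_e$ conditions only on $s_e^0$ and never on the pursuer's realized action, this posterior is \emph{independent} of $s_p^1$, which is precisely what lets the belief be shared across all terms of the sum over $\pi_p$.

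I would then invoke the principle of optimality. Since $v^{t-1}\angled{s_p^1}$ depends only on the pursuer's position and belief and not on the history of moves, optimal continuation play in the subgame attains value $v^{t-1}\angled{s_p^1}(b_{\pi_e})$, so for fixed first-round strategies the normalized continuation value equals $\sum_{s_p^1} \pi_p(s_p^1)\, v^{t-1}\angled{s_p^1}(b_{\pi_e})$. Here I would argue that this quantity depends on $\pi_e$ only through the marginal belief $b_{\pi_e}$: any correlation the evader manufactures between $s_e^0$ and $s_e^1$ is irrelevant downstream, because in the subgame only her current position $s_e^1$ enters the payoff while the pursuer reacts solely to the belief. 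Optimizing the first round then produces the stated $\max_{\pi_p}\min_{\pi_e}$ term. Writing it with the maximizer outermost is legitimate because the payoff is linear in $\pi_p$ and convex in $\pi_e$, being a nonnegative combination of the convex functions $v^{t-1}\angled{s_p^1}$ (Theorem~\ref{thm:convexity}) composed with the affine map $\pi_e \mapsto b_{\pi_e}$; the minimax theorem for such convex-concave payoffs then guarantees $\max_{\pi_p}\min_{\pi_e} = \min_{\pi_e}\max_{\pi_p}$ and that this common value is the value of the stage game. Collecting the immediate-capture term and the discounted continuation term gives Equation~\eqref{eq:value-update}, and identifying this recursion with the operator $H$ is immediate.

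I expect the main obstacle to be the strategic-equivalence step: rigorously justifying that the horizon-$(t-1)$ continuation collapses onto $G^{t-1}\angled{s_p^1,b_{\pi_e}}$ with a belief that is simultaneously (i) the correct Bayesian posterior, (ii) independent of the pursuer's action $s_p^1$, and (iii) a lossless summary of the evader's first-round play despite her private knowledge of $s_e^0$. The concurrency of the moves and the history-independence of the value functions are exactly the two ingredients that make this reduction valid, and the delicate point is verifying that the evader gains nothing from the correlation between her past and present positions.
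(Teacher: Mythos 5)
Your overall route is the paper's route: condition on the opening chance move, fix the one-step strategies for the first round, collapse the continuation after each pursuer action $s_p^1$ onto the shorter-horizon game $G^{t-1}\!\angled{s_p^1,b_{\pi_e}}$ with the Bayesian posterior of Equation~\eqref{eq:belief}, and then restore the optimization over $\pi_p,\pi_e$ to obtain the maximin. Your observation that the posterior is independent of $s_p^1$ (because moves are concurrent and $\pi_e$ conditions only on $s_e^0$) also matches the paper. However, the step you yourself flag as \say{the delicate point} --- that the continuation value depends on $\pi_e$ only through the marginal $b_{\pi_e}$, i.e.\ that the evader's private memory of $s_e^0$ and any correlation between $s_e^0$ and $s_e^1$ is worthless downstream --- is precisely where the entire technical content of the paper's proof lives, and your justification for it is circular as stated. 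Saying \say{the pursuer reacts solely to the belief} presupposes that the continuation is a game whose value is a function of $s_p^1$ and a belief over $s_e^1$ alone, which is exactly what must be proven: a priori, the pursuer's optimal continuation play could depend on the joint distribution over histories $(s_e^0,s_e^1)$ in his information set $I_p[s_p^1]$, and the evader, who does know $s_e^0$, could in principle profit from behaving differently for different initial positions.

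The paper closes this gap with two ingredients your proposal lacks. First, it notes that no information set straddles the subgames below distinct $I_p[s_p^1]$ (both players remember $s_p^1$ under perfect recall), so each subgame can be solved in isolation with chance simulating the fixed belief in $I_p[s_p^1]$. Second --- the crux --- it shows that the subtrees rooted at $s_e^0 s_p^1 s_e^1$ and $\overline{s}_e^0 s_p^1 s_e^1$ are isomorphic as game trees, \emph{including their informational structure} (the evader's future uncertainty concerns only the pursuer's concurrent moves, never $s_e^0$), and then proves Lemma~\ref{lemma:subtree-elimination}: belief mass in the topmost information set can be shifted arbitrarily between nodes with isomorphic subtrees without changing the game value, because an equilibrium profile can be copied across the isomorphism and any profitable deviation in one copy would be profitable in the other. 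Only after this merging does the continuation become literally $G^{t-1}\!\angled{s_p^1,b_{\pi_e}}$ scaled by $\gamma$, with the correlation question resolved rather than assumed away. So: right skeleton, correct identification of where the difficulty sits, but the lemma that resolves that difficulty is missing; your concluding minimax-exchange remark (linearity in $\pi_p$, convexity in $\pi_e$) is correct but is not needed for the stated formula and does not substitute for the missing argument.
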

\begin{proof}
The correctness of the value update formula will be proven by computing the value of the game $G^t \!\angled{s_p^0,b}$ in a bottom-up fashion.
We start by considering that the one-step strategies of the players for the first round of the game are fixed, while they play optimally in the rest of the game.
This determines pursuer's expected reward at every node in the game tree, which allows us to express his expected utility in the root node of the game tree as an expectation over expected rewards in subsequent nodes (Lemma~\ref{lemma:root_value}).
Due to the fixed behavior in the first round of the game, parts of the game tree are independent on each other --- we refer to these subgames as $G[s_p^1]$.
This allows us to evaluate the expectation depicted in Lemma~\ref{lemma:root_value} by solving these games separately.
It turns out that games $G[s_p^1]$ are strategically equivalent to a game $G^{t-1}\!\angled{s_p^1,b_{\pi_e}}$ with shorter horizon, the solution of which is represented by value functions $v^{t-1}$.
The expectation can thus be expressed solely in terms of $v^{t-1}$.
Finally the assumption of fixed one-step strategies for the first round of the game gets relaxed, which yields the desired maximin formula from Equation~\eqref{eq:value-update}.
The derivation of the value update formula relies on several technical lemmas, the proofs of which can be found in the Appendix.

Let $\pi_p \!\in\! \Delta(\adj(s_p^0))$ be a fixed pursuer's one-step strategy, and $\pi_e : \calV \rightarrow \Delta(\calV)$ be a fixed one-step strategy of the evader. Assume that both players play according to $\pi_p$ and $\pi_e$ in the first round of the game, i.e. the pursuer follows $\pi_p$ in his information set $I_p[\emptyset]$ (i.e. pursuer's information set where he has not acted yet, see Fig.~\ref{fig:efg}) and the evader plays according to $\pi_e(s_e^0)$ in her information set $I_e[ s_e^0 ]$ (where she has received the information that she is located in vertex $s_e^0$). Once the first round of the game is over, players continue with their best strategies for the situation they are currently in. We denote such optimal strategies where the players are restricted to play $\pi_p$ and $\pi_e$ in the first round as $\sigma_p$ and $\sigma_e$.


\begin{definition}
Let $\pi_p$ and $\pi_e$ be fixed one-step strategies of the players for the first round of the game $G^t \!\angled{s_p^0,b}$ and let $\sigma_p$, $\sigma_e$ be optimal strategies of the players with the restriction to play $\pi_p$ and $\pi_e$ in the first round. The expected reward of the pursuer when strategies $(\sigma_p,\sigma_e)$ are followed and node $h$ in the game tree is reached is denoted $\uh(h)$ and termed \emph{expected reward in $h$}.
\end{definition}

We follow by expressing the expected utility the pursuer gets when strategies $(\sigma_p,\sigma_e)$ are followed by propagating expected rewards from subsequent nodes in the game tree.
We use histories of the form $s_e^0 s_p^1 s_e^1$ where the evader started in vertex $s_e^0$ (based on the move of nature) and then, during the first round of the game, the pursuer moved his units to vertices $s_p^1$ and the evader moved to $s_e^1$.

\begin{lemma}
\label{lemma:root_value}
The expected reward in the root node equals to:
\vspace*{-0.2em}
{\small
\begin{align*}
\uh(\emptyset) &= \sum_{s_e^0 \in s_p^0} b(s_e^0) + \left[ \sum_{s_e^0 \not\in s_p^0} b(s_e^0) \right] \cdot \sum_{s_p^1} \pi_p(s_p^1) \Bigg( \gamma \sum_{s_e^1 \in s_p^1} b_{\pi_e}(s_e^1) + \\
& \hspace{-30pt} + \Bigg[ \sum\limits_{s_e^1 \not\in s_p^1} b_{\pi_e}(s_e^1) \Bigg] \sum_{s_e^1 \not\in s_p^1} \sum\limits_{s_e^0 \not\in s_p^0} \Bigg[ \frac{b(s_e^0) \cdot \pi_e(s_e^0,s_e^1)}{\sum\limits_{\tilde{s}_e^1 \not\in s_p^1} \sum\limits_{\tilde{s}_e^0 \not\in s_p^0} b(\tilde{s}_e^0) \cdot \pi_e(\tilde{s}_e^0,\tilde{s}_e^1)} \cdot \uh(s_e^0 s_p^1 s_e^1) \Bigg] \Bigg) \numberthis\label{eq:root_value}
\end{align*}
}
\end{lemma}

Lemma~\ref{lemma:root_value} expressed the value in the root node based on the expected rewards in histories $s_e^0 s_p^1 s_e^1$ where the pursuer is to move.
The pursuer knows only $s_p^1$, hence these histories are partitioned into his information sets $I_p[s_p^1]$, one for each move $s_p^1$ of the pursuer in the first round (see Fig.~\ref{fig:efg}).
Importantly, for every subgame below $I_p[s_p^1]$, there is no information set that would involve nodes not present in this subgame --- neither pursuer nor evader forget that $s_p^1$ was played.
The optimal behavior in these subgames therefore depends only on the belief in $I_p[s_p^1]$, which is fixed due to the fixed behavior in the first round.
We can therefore compute value of the subgame below $I_p[s_p^1]$ separately by making chance simulate the belief in this information set.



Let us construct a game $G[s_p^1]$ which consists of the information set $I_p[s_p^1]$ and the subgame beneath it.
In this game, information set $I_p[s_p^1]$ is reached with probability $\beta = \sum_{s_e^1 \not\in s_p^1} b_{\pi_e}(s_e^1)$, while with probability $1-\beta$ the pursuer gets utility $\gamma$ without play --- this accounts for the reward the pursuer gets if he catches the evader in the first round by playing action $s_p^1$.
The nature player simulates the belief $b[s_p^1]$ in the information set $I_p[s_p^1]$, so that the probability of every history in this information set, given this information set was reached, is identical with the original game.
The value of the game $G[s_p^1]$ corresponds to the following part of the Equation~\eqref{eq:root_value}:
\vspace*{-0.5em}
{\small
\begin{equation}
\gamma \!\!\!\! \underbrace{\sum_{s_e^1 \in s_p^1} \!\! b_{\pi_e}(s_e^1)}_{\substack{\text{Evader caught}\\\text{in the first round}}} \!\!\! + \!\underbrace{\left[ \sum\limits_{s_e^1 \not\in s_p^1} \!\! b_{\pi_e}(s_e^1) \right]}_{\substack{\text{Evader not caught}\\\text{in the first round}}} \sum_{s_e^1 \not\in s_p^1} \sum\limits_{s_e^0 \not\in s_p^0} \!\vast[ \!\!\!\!\!\!\!\underbrace{\frac{b(s_e^0) \cdot \pi_e(s_e^0,s_e^1)}{\sum\limits_{\tilde{s}_e^1 \not\in s_p^1} \sum\limits_{\tilde{s}_e^0 \not\in s_p^0} \!\! b(\tilde{s}_e^0) \cdot \pi_e(\tilde{s}_e^0,\tilde{s}_e^1)}}_{\substack{\text{Belief $b[s_p^1]$ of history $s_e^0 s_p^1 s_e^1$ in $I_p[s_p^1]$}}} \!\!\!\!\!\!\!\,\cdot\, \uh(s_e^0 s_p^1 s_e^1) \vast]
\label{eq:infset-game}
\end{equation}}

In the case of the game $G[s_p^1]$, there are multiple histories for every \emph{current} position of the evader $s_e^1$ in the information set $I_p[s_p^1]$ (resulting from different initial locations of the evader $s_e^0$).
We show that it is not necessary to account for different initial positions of the evader $s_e^0$, and thus all histories in $I_p[s_p^1]$ having the same current position of the evader $s_e^1$ can be merged.
The resulting game contains a single history for each $s_e^1$ in $I_p[s_p^1]$, and thus this game is equivalent to a shorter horizon game $G^{t-1}\!\angled{s_p^1,b_{\pi_e}}$ up to multiplication of the utilities by $\gamma$ to account for a round that has already passed. This allows using the solution of $G^{t-1}\!\angled{s_p^1,b_{\pi_e}}$ represented by value functions $v^{t-1}$ to express the value of $G[s_p^1]$.

\begin{definition}
Two deterministic game trees over nodes $\calH_1,\calH_2$ are isomorphic if there exists a bijection $\xi: \calH_1 \rightarrow \calH_2$ such that $v \in \calH_1$ is a successor of $u \in \calH_1$ if and only if $\xi(v)$ is a successor of $\xi(u)$, $n \in \calH_1$ is a pursuer's node if and only if $\xi(n)$ is a pursuer's node, it is a terminal node if and only if $\xi(n)$ is a terminal node and the utilities $u(n) = u(\xi(n))$. Moreover the trees have the same informational structure so that two nodes $u,v \in \calH_1$ are in the same information set if and only if nodes $\xi(u),\xi(v)$ are in the same information set.
\end{definition}

We can observe that subtrees of nodes $s_e^0 s_p^1 s_e^1$ and $\overline{s}_e^0 s_p^1 s_e^1$ (where $s_e^0$ and $\overline{s}_e^0$ stands for two different initial positions of the evader) are isomorphic as we can establish a bijection $\xi(s_e^0 s_p^1 s_e^1 h_{rest})=\overline{s}_e^0 s_p^1 s_e^1 h_{rest}$. The utility of terminal histories does not depend on the initial position of the evader (only on the time when the pursuer managed to capture the evader). Whenever pursuer's node $u$ is in information set $I_p$, node $\xi(u)$ is in $I_p$ as well (because pursuer has no way to detect the evader's initial position).
Moreover whenever evader cannot distinguish between two histories $s_e^0 s_p^1 s_e^1 \cdots s_p^q$ and $s_e^0 s_p^1 s_e^1 \cdots \overline{s}_p^q$, she cannot distinguish between histories $\overline{s}_e^0 s_p^1 s_e^1 \cdots s_p^q$ and $\overline{s}_e^0 s_p^1 s_e^1 \cdots \overline{s}_p^q$ either (because her uncertainty is related to the pursuer's move at round $q$, which does not depend on the initial position of the evader).
Thus the subtrees have also the same informational structure.

\begin{lemma}
\label{lemma:subtree-elimination}
Let $I$ be the topmost information set of game $G[s_p^1]$ and let the belief $b[I]$ over nodes from $I$ be known and fixed. Let $n_1,n_2 \in I$ be two nodes whose subtrees are isomorphic. Then a game $G'$ with the same structure as $G$ with any belief $b'[I]$ in $I$, satisfying $b[n_1]+b[n_2]=b'[n_1]+b'[n_2]$ and $b[n]=b'[n]$ for all nodes other than $n_1$ and $n_2$, has the same value as $G$.
\end{lemma}

Thanks to the Lemma~\ref{lemma:subtree-elimination} and the isomorphism of the subtrees beneath $s_e^0 s_p^1 s_e^1$ and $\overline{s}_e^0 s_p^1 s_e^1$, histories $s_e^0 s_p^1 s_e^1$ and $\overline{s}_e^0 s_p^1 s_e^1$ can be merged and associated beliefs added up. By repeating this process, we end up with a single history for each current position of the evader $s_e^1$ (let $s_e^0 s_p^1 s_e^1$ be such history), whose belief is
\vspace*{-0.5em}
{\small
\begin{equation}
b'[s_p^1](s_e^0 s_p^1 s_e^1) \coloneqq \frac{\sum\limits_{s_e^0 \not\in s_p^0} b(s_e^0) \cdot \pi_e(s_e^0,s_e^1)}{\sum\limits_{\tilde{s}_e^1 \not\in s_p^1} \sum\limits_{\tilde{s}_e^0 \not\in s_p^0} b(\tilde{s}_e^0) \cdot \pi_e(\tilde{s}_e^0,\tilde{s}_e^1)} = \frac{b_{\pi_e}(s_e^1)}{\sum\limits_{\tilde{s}_e^1 \not\in s_p^1} b_{\pi_e}(s_e^1)}; \;\; b'[s_p^1](s_e^1) \text{ for short}
\label{eq:is-belief}
\end{equation}}

The updated belief $b'[s_p^1]$ from Equation~\eqref{eq:is-belief} complies with belief $b_{\pi_e}$ (computed according to Equation~\eqref{eq:belief}) updated with the information that the evader is located in none of the vertices of $s_p^1$. The belief in $I_p[s_p^1]$ is identical with the belief in top-level information set of the game $G^{t-1}\!\angled{s_p^1,b_{\pi_e}}$; and hence the resulting game is identical to the game $G^{t-1}\!\angled{s_p^1,b_{\pi_e}}$ up to the multiplication by $\gamma$. The value of the game $G[s_p^1]$ (Equation~\eqref{eq:infset-game}), from which this game was derived, is thus $\gamma v^{t-1}\!\angled{s_p^1}(b_{\pi_e})$. We substitute this value to Equation~\eqref{eq:root_value} to obtain
\vspace*{-0.4em}
{\small
\begin{align*}
\uh(\emptyset)
&= \sum_{s_e^0 \in s_p^0} \! b(s_e^0) + \left[ \sum_{s_e^0 \not\in s_p^0} \! b(s_e^0) \right] \cdot \sum_{s_p^1} \pi_p(s_p^1) \cdot \Bigg( \gamma v^t\angled{s_p^1}(b_{\pi_e}) \Bigg) \numberthis\label{eq:freezed}
\end{align*}
}

\vspace*{-0.5em}

\noindent By allowing the players to choose their optimal one-step strategies $\pi_p$ and $\pi_e$ in Equation~\eqref{eq:freezed}, we obtain the desired maximin formula shown in Equation~\eqref{eq:value-update}. \qed

\end{proof}

\subsection{From value functions to optimal one-step strategies}
\label{sec:lp}
The evaluation of the maximin formula from Equation~\eqref{eq:value-update} involves computation of optimal strategies of the players. In this section we show that if the value functions $v^{t-1}$ are piecewise linear and convex functions represented by sets of $\alpha$-vectors (which holds due to Theorem~\ref{thm:convexity}), the strategies can be found out by means of linear programming.

Due to the space constraints, we provide only the linear program for computing optimal one-step strategy of the pursuer in the game $G^t\!\angled{s_p^0,b}$ (its dual for computing evader's strategy can be found in the Appendix). At the beginning of each round, the pursuer realizes what vertices the evader is \emph{not} located in, and hence updates his belief about the position of the evader. We can therefore restrict ourselves to the case where $b(s_e)=0$ for all $s_e \in s_p^0$.



In the following linear program, the pursuer seeks for a strategy maximizing his expected utility against the best-responding opponent. He assumes strategies of the form \say{move to $s_p^1$ first and then follow strategy whose value is represented by $\alpha \in v^{t-1}\!\angled{s_p^1}$}. The choice of $\alpha$ uniquely defines such strategy.
The probability of playing each strategy $\alpha \in v^{t-1}\!\angled{s_p^1}$ is represented by variable $\hat\pi_p(s_p^1,\alpha)$. Constraint~\eqref{lp:pursuer:br} corresponds to the value of playing such randomized strategy against the best-responding evader who starts in vertex $s_e$ ($\alpha(s_e')$ denotes the value of $\alpha$ evaluated at pure belief corresponding to action $s_e'$ of the evader). The evader starts in $s_e$ with probability $b(s_e)$, hence the objective~\eqref{lp:pursuer} calculates the expectation over individual $v(s_e)$.
For the resulting one-step strategy of the pursuer, it holds that $\pi(s_p^1) = \sum_{\alpha \in v^{t-1}\!\angled{s_p^1}} \hat\pi(s_p^1,\alpha)$.
{\small
\begin{align}
\max_{v,\hat{\pi}_p}\  & \gamma \sum_{s_e \in \calV} b(s_e) \cdot v(s_e) \label{lp:pursuer}\\
\text{s.t.} \  & \!\! \sum_{s_p^1 \in \adj(s_p^0) \;;\; \alpha \in v^{t-1}\angled{s_p^1}} \!\!\!\!\!\!\!\!\!\!\!\!\!\!\!\!\!\!\!\!\!\alpha(s_e') \cdot \hat{\pi}_p(s_p^1,\alpha) \geq v(s_e) && \forall s_e \in \calV \ \forall s_e' \in \adj(s_e) \label{lp:pursuer:br} \\
& \!\! \sum_{s_p^1 \in \adj(s_p^0) \;;\; \alpha \in v^{t-1}\angled{s_p^1}} \!\!\!\!\!\!\!\!\!\!\!\!\!\!\!\!\!\!\!\!\! \hat{\pi}_p(s_p',\alpha) = 1 \label{lp:dual:probability} \\
& \;\;\;\;\;\;\;\;\;\;\;\;\;\; \hat{\pi}_p(s_p^1,\alpha) \geq 0 && \forall s_p^1 \in \adj(s_p^0) \; \forall \alpha \in v^{t-1}\angled{s_p^1}
\end{align}
}


\subsection{Computing value functions}
\label{sec:computing}

In each iteration of our value iteration algorithm, value functions $v^t$ are constructed from the solution from the previous iteration --- value functions $v^{t-1}$.
By repeating this construction, a sequence of finite-horizon value functions $\braced{v^t}_{t=0}^\infty$ approaching the values of the infinite-horizon game is being constructed.
The value functions $v^t$ that are about to be constructed, as well as $v^t$, are piecewise linear and convex (Theorem~\ref{thm:convexity}).
In this section, we show that this allows us to avoid evaluating the dynamic programming operator $H$ (Equation~\eqref{eq:value-update}) in every point in the belief space and enables us to construct $v^t$ by considering only a finite subset of beliefs, corresponding to the extreme points of line segments of $v^t$.
We proceed in two steps: (1) firstly we compute a function $Q^t_{\pi_p}\!\angled{s_p^0}$ corresponding to the expected utility the pursuer gets if he plays $\pi_p$ at the first round of the longer horizon game $G^t\!\angled{s_p^0,b}$; (2) then we show how to compute $v^t\!\angled{s_p^0}$ as a combination of multiple $Q^t_{\pi_p}\!\angled{s_p^0}$ for properly chosen one-step strategies $\pi_p$.
We start with a formal definition of function $Q^t_{\pi_p}\!\angled{s_p^0}$.

\begin{definition}
Let $\pi_p$ be pursuer's one-step strategy for the first round of the game $G^t\!\angled{s_p^0,b}$. The \emph{value of $\pi_p$} is a function $Q^t_{\pi_p}\!\angled{s_p^0}$ assigning the expected reward the pursuer gets in the game $G^t\!\angled{s_p^0,b}$ against the best-responding opponent, when he plays $\pi_p$ in the first round and continues by playing according to his optimal strategy in the rest of the game, i.e.
\vspace*{-0.5em}
{\small
\begin{equation}
Q^t_{\pi_p}\!\angled{s_p^0}(b) \coloneqq \sum_{s_e \in s_p^0} \!\! b(s_e) + \gamma \left[ \sum_{s_e \in \calV \setminus s_p^0} \!\!\!\! b(s_e) \right] \cdot \min_{\pi_e} \! \sum_{s_p^1 \in \calV^N} \!\!\! \pi_p(s_p^1) \cdot v^{t-1}\!\angled{s_p^1}(b_{\pi_e})
\end{equation}
}
\end{definition} 

According to the previous definition, once the first round of the game is over, the pursuer continues with his optimal strategy.
The following lemma shows that this optimal strategy for the rest of the game is well-defined and its value is characterized by the value functions $v^{t-1}$.

\begin{lemma}
\label{lemma:Qt}
Let $\pi_p$ be pursuer's fixed one-step strategy for the first round of the game. For every belief $b$ there are strategies $\sigma_p[s_p^1]$, one for each $s_p^1 \in \adj(s_p^0)$, represented by $\alpha$-vectors $\alpha[s_p^1] \in v^{t-1}\!\angled{s_p^1}$, such that it is optimal to follow $\sigma_p[s_p^1]$ when $s_p^1$ was played in the first round of the game. The value of strategy $\sigma_p$ prescribing the pursuer to play according to $\pi_p$ in the first round and continue by using respective $\sigma_p[s_p^1]$ is linear and the corresponding $\alpha$-vector satisfies
\vspace*{-0.5em}
{\small
\begin{equation}
\alpha^{\sigma_p}(s_e) = \begin{cases}
1 & s_e \in s_p \\
\gamma \min\limits_{s_e' \in \adj(s_e)} \sum\limits_{s_p^1} \pi_p(s_p^1) \cdot \alpha[s_p^1](s_e') & \mathrm{otherwise}
\end{cases}
\end{equation}}
\end{lemma}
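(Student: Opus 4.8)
The plan is to analyse, for a fixed first-round strategy $\pi_p$, the continuation phase as a collection of independent subgames and then to read off the value vector of the composed strategy by propagating the subgame $\alpha$-vectors backwards through the evader's best response. The proof naturally splits into establishing that an optimal continuation exists and is described by $\alpha$-vectors, computing the value of the composed strategy once these are frozen, and finally matching that value to $Q^t_{\pi_p}\!\angled{s_p^0}(b)$.

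First I would pin down the continuation. Once $s_p^1$ has been played in the first round and the evader has not been caught, the game reaches the information set $I_p[s_p^1]$ with the belief $b'[s_p^1]$ obtained from $b_{\pi_e}$ (Equation~\eqref{eq:belief}) by conditioning on the evader occupying none of the vertices of $s_p^1$; by the reduction and merging carried out in the proof of Theorem~\ref{thm:value-update} (Lemma~\ref{lemma:subtree-elimination}) the subgame rooted there is strategically equivalent to $G^{t-1}\!\angled{s_p^1,b_{\pi_e}}$ up to the factor $\gamma$. Since $v^{t-1}\!\angled{s_p^1}$ is PWLC by Theorem~\ref{thm:convexity}, i.e. the upper envelope of its $\alpha$-vectors, there is an optimal continuation strategy $\sigma_p[s_p^1]$ whose value is described by an $\alpha$-vector $\alpha[s_p^1]$ attaining this envelope at the belief reached in $I_p[s_p^1]$. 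This establishes that following $\sigma_p[s_p^1]$ is optimal after $s_p^1$.

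Next I would freeze these continuation value vectors and compute the value of the composed strategy $\sigma_p$. Because the continuation is now fixed, its contribution is \emph{linear} in the post-first-round belief, so by Lemma~\ref{lemma:v:mixed-strategy-linear} the value of $\sigma_p$ against a best-responding evader is linear in $b$. The key simplification is that a linear continuation decouples the evader's decision across her possible starting vertices: conditioned on starting in $s_e \notin s_p^0$, her expected contribution when moving to $s_e'$ is $\gamma \sum_{s_p^1} \pi_p(s_p^1)\,\alpha[s_p^1](s_e')$, independently of the other starting vertices, so her optimal move is the pure minimizer $\argmin_{s_e' \in \adj(s_e)} \sum_{s_p^1}\pi_p(s_p^1)\alpha[s_p^1](s_e')$. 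Adding the immediate-capture reward $1$ for the starting vertices $s_e \in s_p^0$ and the discount $\gamma$ for the rest yields exactly the claimed $\alpha^{\sigma_p}$.

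The step I expect to be the main obstacle is showing that this fixed-continuation value actually coincides with $Q^t_{\pi_p}\!\angled{s_p^0}(b)$, i.e. that committing to the continuation costs the pursuer nothing. One inequality is free: any fixed $\alpha[s_p^1]$ lies below the envelope $v^{t-1}\!\angled{s_p^1}$, so the value of $\sigma_p$ is at most $Q^t_{\pi_p}\!\angled{s_p^0}(b)$ for every choice. For the reverse inequality I would set up a saddle-point argument: choosing $\alpha[s_p^1]$ optimal at the belief induced by the evader's minimizing response $\pi_e$ in the definition of $Q^t_{\pi_p}\!\angled{s_p^0}$, the vectors $\sum_{s_p^1}\pi_p(s_p^1)\alpha[s_p^1]$ form a subgradient of the convex map $\pi_e \mapsto \sum_{s_p^1}\pi_p(s_p^1)v^{t-1}\!\angled{s_p^1}(b_{\pi_e})$ at its minimizer, which certifies that $\pi_e$ remains a best response against the now-linearised continuation, so no value is lost. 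The delicate point is that the robust optimal continuation may have to be \emph{randomized}: when several $\alpha$-vectors are simultaneously active at the equilibrium belief, it is their mixture, rather than any single extreme vector, that the evader cannot exploit, so $\alpha[s_p^1]$ must be understood as the value vector of the (possibly randomized) optimal subgame strategy $\sigma_p[s_p^1]$. Verifying that such a continuation exists and is consistent with the evader's equilibrium move — equivalently, invoking the minimax theorem on the payoff, which is bilinear in the pursuer's continuation and the evader's one-step strategy — is the crux of the argument.
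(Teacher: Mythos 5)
Your first two steps are, almost sentence for sentence, the paper's entire proof of this lemma: the paper likewise fixes $\pi_e$ to be the evader's best response to $\pi_p$, takes $\sigma_p[s_p^1]$ to be a strategy whose $\alpha$-vector attains the envelope $v^{t-1}\!\angled{s_p^1}$ at the induced belief $b_{\pi_e}$, invokes Lemma~\ref{lemma:v:mixed-strategy-linear} to reduce everything to pure beliefs, and reads off the claimed formula from the evader's pointwise minimization over $s_e' \in \adj(s_e)$ (capture reward $1$ on $s_p^0$, discount $\gamma$ elsewhere). Where you diverge is your third step --- and the divergence is in your favor, because the paper simply stops after the first two steps: it never verifies that \emph{committing} to the chosen continuations costs the pursuer nothing, i.e.\ that the composed strategy attains $Q^t_{\pi_p}\!\angled{s_p^0}(b)$, which is exactly the property Section~\ref{sec:computing} relies on when it computes $Q^t_{\pi_p}$ by enumerating choices of $\alpha[s_p^1]$.

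Your worry about randomized continuations is not hypothetical; interim optimality at the equilibrium belief $b_{\pi_e}$ does not imply commitment optimality. Suppose $\pi_p$ is deterministic, the evader can move to one of two vertices $x,y$, and $v^{t-1}\!\angled{s_p^1}$ is represented by the two vectors $(c,0)$ and $(0,c)$ (each the value of a strategy that commits to chasing one side). This representation can satisfy Theorem~\ref{thm:finite-mixed-strategies}, since at every belief one of the two is maximin, and the $50/50$ mixture's value vector $(c/2,c/2)$ is weakly dominated by the envelope and hence need not be stored. Then every \emph{pure} per-branch choice yields a composed entry $\gamma\min\braced{c,0}=0$, whereas $Q^t_{\pi_p}\!\angled{s_p^0}(b)=\gamma\,c/2$, attained only by mixing the two continuations. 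So the lemma, read with a single stored $\alpha[s_p^1]$ per branch, fails; it becomes true once $\alpha[s_p^1]$ is allowed to be the value vector of a (possibly randomized) optimal subgame strategy, which is precisely what your minimax/subgradient argument supplies, and which is also what the linear program of Section~\ref{sec:lp} implements through the joint variables $\hat\pi_p(s_p^1,\alpha)$. In short, your route is the paper's route plus the missing saddle-point step, and that extra step is what makes the statement, and its downstream use in Algorithm~\ref{alg:backup}, actually sound.
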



Lemma~\ref{lemma:Qt} gives us a direct algorithm for computing $Q^t_{\pi_p}$.
PWLC functions $v^{t-1}$ correspond to a finite number of horizon-$t$ strategies, represented by a finite number of $\alpha$-vectors.
Thus there is only a finite number of ways to choose strategies $\sigma_p[s_p^1]$ from Lemma~\ref{lemma:Qt}, which can be found by means of enumeration.
The maximization over linear functions representing value of such strategies corresponds to the function $Q^t_{\pi_p}\!\angled{s_p}$ which is thus piecewise linear and convex.

The definition of $Q^{t}_{\pi_p}\!\angled{s_p}$ implies that we can compute the value function $v^{t+1}\!\angled{s_p}$ by allowing the pursuer to play arbitrary strategy $\pi_p$, when
\vspace*{-0.5em}
{\small
\begin{equation}
v^t\!\angled{s_p^0}(b) = \max_{\pi_p} Q^t_{\pi_p}\!\angled{s_p^0}(b)
\label{eq:vtqt}
\end{equation}
}

\vspace*{-0.5em}

As a consequence of Theorem~\ref{thm:finite-mixed-strategies}, it is sufficient to consider a finite set $\Pi_p$ of strategies in the maximizer of Equation~\eqref{eq:vtqt} and thus obtain $v^t\!\angled{s_p^0}$ as the pointwise maximum from respective $Q^t_{\pi_p}\!\angled{s_p^0}$ functions, $v^t\!\angled{s_p} = \bigoplus_{\pi_p \in \Pi_p} Q^t_{\pi_p}\!\angled{s_p}$. The set of such strategies $\Pi_p$ is however initially unknown. We propose an algorithm (Algorithm~\ref{alg:backup}) that constructs both the set of strategies $\hat\Pi_p$ and the value function $\hat v^t\!\angled{s_p^0}$ incrementally by iteratively verifying whether the current set of the strategies $\hat\Pi_p$ is sufficient for obtaining the actual value function $v^t\!\angled{s_p^0}$.

\begin{algorithm}[t]
\DontPrintSemicolon
$\hat{v}^t\!\angled{s_p^0} \gets \braced{\mathbf{0}^{|\calV|}}$ \;
$\hat{\Pi}_p = \emptyset$ \;
\While{$\exists b \in \Delta(\calV), \pi_p \not\in \Pi_p : Q^t_{\pi_p}\!\angled{s_p^0}(b) > \hat{v}^t\!\angled{s_p^0}(b)$}{
  $\pi_p \gets $ optimal strategy of the pursuer at belief $b$ for the first round (see~\eqref{lp:pursuer}) \;
  $\hat{\Pi}_p \gets \hat{\Pi}_p \cup \braced{\pi_p}$ \;
  $\hat{v}^t\!\angled{s_p^0} \gets \hat{v}^t\!\angled{s_p^0} \oplus Q^t_{\pi_p}\!\angled{s_p}$ \;
}
\Return{$\hat{v}^t\!\angled{s_p}$}
\caption{Incremental construction of value function $v^t\!\angled{s_p}$}
\label{alg:backup}
\end{algorithm}

The Algorithm~\ref{alg:backup} is constructing a set of strategies $\hat{\Pi}_p$ and a corresponding estimate of value function $\hat{v}^t\!\angled{s_p^0}=\bigoplus_{\pi_p \in \hat{\Pi}_p} Q^t_{\pi_p}\!\angled{s_p^0}$, starting with an empty set $\hat{\Pi}_p$.
At each iteration, it verifies whether strategies $\hat\Pi_p$ used to compute current $\hat{v}^{t+1}\angled{s_p^0}$ are optimal in every belief $b \in \Delta(\calV)$. 
If it finds a belief $b$ where the strategy can be improved, i.e. there exists $\pi_p$ such that $Q^t_{\pi_p}\!\angled{s_p^0}(b) > \hat{v}^t\!\angled{s_p^0}(b)$, it updates the set $\hat{\Pi}_p$ and recomputes $\hat{v}^t\!\angled{s_p}$.
If no such belief is found, all required strategies were considered and $\hat{v}^t\!\angled{s_p^0}=v^t\!\angled{s_p^0}$.

Whenever the value function $\hat{v}^t\!\angled{s_p^0}$ is not yet optimal in the whole belief space, i.e. there exists a belief $b$ where $Q^t_{\pi_p}\!\angled{s_p^0}(b) > \hat{v}^t\!\angled{s_p^0}(b)$, there exists a belief $b'$ with the same property that forms an extreme point of a line segment on $\hat{v}^t\!\angled{s_p^0}$.
This property is characterized by Lemma~\ref{lemma:belief-finite}.

\begin{lemma}
\label{lemma:belief-finite}
If there is a belief $b$ where $v^t\!\angled{s_p^0}(b) > \hat{v}^t\!\angled{s_p^0}(b)$, there must be a belief $b'$ that forms an extreme point of a line segment on the surface of $\hat{v}^t\!\angled{s_p^0}$ where $v^t\!\angled{s_p^0}(b') > \hat{v}^t\!\angled{s_p^0}(b')$.
\end{lemma}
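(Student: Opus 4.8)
The plan is to exploit the fact that both $v^t\angled{s_p^0}$ and $\hat v^t\angled{s_p^0}$ are piecewise linear and convex, so that their difference is convex on each linear piece of $\hat v^t\angled{s_p^0}$, and to invoke the standard fact that a convex function on a polytope attains its maximum at a vertex. First I would record two preliminary facts. By Theorem~\ref{thm:convexity} the value function $v^t\angled{s_p^0}$ is PWLC, and $\hat v^t\angled{s_p^0} = \bigoplus_{\pi_p \in \hat\Pi_p} Q^t_{\pi_p}\angled{s_p^0}$ is a pointwise maximum of finitely many affine functions, hence also PWLC. Moreover, each $Q^t_{\pi_p}\angled{s_p^0}(b)$ is the value of committing to $\pi_p$ in the first round and is therefore at most the game value $v^t\angled{s_p^0}(b)$; since $\hat v^t\angled{s_p^0}$ is a maximum of such functions, we obtain $\hat v^t\angled{s_p^0}(b) \le v^t\angled{s_p^0}(b)$ for every $b \in \Delta(\calV)$, so the gap $d(b) \coloneqq v^t\angled{s_p^0}(b) - \hat v^t\angled{s_p^0}(b)$ is nonnegative throughout the simplex.

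Next I would describe the polyhedral subdivision induced by $\hat v^t\angled{s_p^0}$. The belief simplex $\Delta(\calV)$ is partitioned into finitely many cells, on each of which a single affine function $L$ (one of the $\alpha$-vectors) dominates, so that $\hat v^t\angled{s_p^0}$ coincides with $L$ there. Each cell is the intersection of $\Delta(\calV)$ with the finitely many half-spaces $\braced{Q^t_{\pi_p}\angled{s_p^0} \ge Q^t_{\pi_p'}\angled{s_p^0}}$ ranging over the competing strategies, hence a compact polytope. The extreme points of line segments on the surface of $\hat v^t\angled{s_p^0}$ referenced in the statement are precisely the vertices of these cells.

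I would then localize the given belief $b$: it lies in (the closure of) some cell $R$ on which $\hat v^t\angled{s_p^0} = L$ for an affine $L$. On $R$ the gap equals $d = v^t\angled{s_p^0} - L$, a difference of a convex function and an affine one, hence convex. A convex function on the compact polytope $R$ attains its maximum at a vertex $b'$ of $R$, and since $b \in R$ with $d(b) > 0$, that maximum is strictly positive: $d(b') = \max_R d \ge d(b) > 0$. As $b' \in R$ we have $\hat v^t\angled{s_p^0}(b') = L(b')$, so $v^t\angled{s_p^0}(b') > \hat v^t\angled{s_p^0}(b')$, and because $b'$ is a vertex of a cell it is an extreme point of a line segment on the surface of $\hat v^t\angled{s_p^0}$, which is the claim.

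The main obstacle, and the only place demanding care, is the geometric bookkeeping: verifying that each cell $R$ really is a polytope, that its vertices coincide with the extreme points of line segments mentioned in the statement, and that the cited property of convex functions on polytopes applies in this discrete setting. A secondary subtlety is that $b$ may lie on the boundary shared by several cells; this causes no difficulty, since any one adjacent cell may be chosen, its closure still containing $b$ with $\hat v^t\angled{s_p^0} = L$ there.
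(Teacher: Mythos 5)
Your proposal is correct and takes essentially the same route as the paper: both arguments localize $b$ to a cell (facet) on which $\hat{v}^t\!\angled{s_p^0}$ coincides with a single affine function and then play the convexity of $v^t\!\angled{s_p^0}$ (Theorem~\ref{thm:convexity}) against that affineness --- the paper phrases this as a contradiction at the facet's vertices, while you directly maximize the convex gap $v^t\!\angled{s_p^0}-L$ over the polytope, which is the same underlying fact. One minor slip to fix: the cells of the subdivision are cut out by half-spaces comparing the \emph{affine pieces} ($\alpha$-vectors) of $\hat{v}^t\!\angled{s_p^0}$, not by the sets $\braced{Q^t_{\pi_p}\!\angled{s_p^0} \geq Q^t_{\pi_p'}\!\angled{s_p^0}}$, since the $Q^t_{\pi_p}\!\angled{s_p^0}$ are themselves PWLC rather than affine and such sets need not be half-spaces.
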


Thanks to Lemma~\ref{lemma:belief-finite}, we can consider only a \emph{finite} set of beliefs that form extreme points of line segments on the value function $\hat{v}^t\!\angled{s_p^0}$.
In every iteration, a one-step strategy that is optimal at some belief point (and thus must be present in $\Pi_p$) is added to the set $\hat\Pi_p$.
As a consequence of the Theorem~\ref{thm:finite-mixed-strategies}, the set $\Pi_p$ that is necessary to obtain the optimal value function $v^t\!\angled{s_p^0}$ is finite.
Hence after a finite number of iterations, the Algorithm~\ref{alg:backup} terminates.

\subsection{Convergence and uniqueness of the solution}
\label{sec:convergence}
We demonstrate the convergence properties of our value iteration algorithm and the uniqueness of the value functions solving the infinite horizon game by showing that the dynamic programming operator $H$ (Equation~\ref{eq:value-update}) is a contraction mapping. The desired properties then follows from the Banach's fixed point theorem~\cite{ciesielski2007stefan}. We show the contractivity of $H$ under the following max-norm:
\vspace*{-0.1em}
{\small
\begin{equation}
\| v - \overline{v} \| = \max_{s_p^0 \in \calV^N} \max_{b \in \Delta(\calV)} | v\angled{s_p^0}(b) - \overline{v}\angled{s_p^0}(b) |
\end{equation}}

\vspace*{-0.5em}

\begin{lemma}
\label{lemma:contractivity}
The operator $H$ is a contraction with contractivity factor $\gamma < 1$ under max-norm.
\end{lemma}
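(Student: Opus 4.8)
The plan is to verify the definition of a contraction directly: I will show that $\|Hv - H\overline{v}\| \le \gamma \|v - \overline{v}\|$ for any two collections of value functions $v, \overline{v}$, after which Lemma~\ref{lemma:contractivity} is immediate, and the uniqueness and convergence claims of Theorem~\ref{thm:convergence} follow from Banach's fixed point theorem. Fix a pursuer position $s_p^0 \in \calV^N$ and a belief $b \in \Delta(\calV)$. In the update formula~\eqref{eq:value-update} the summand $\sum_{s_e \in s_p^0} b(s_e)$ does not depend on the value functions being updated, so it cancels in the difference $Hv\angled{s_p^0}(b) - H\overline{v}\angled{s_p^0}(b)$. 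What remains is the nonnegative scalar multiplier $\gamma \bigl[ \sum_{s_e \in \calV \setminus s_p^0} b(s_e) \bigr]$, which is at most $\gamma$ since it is $\gamma$ times a sub-probability, multiplying the difference of the two maximin expressions.

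It therefore suffices to bound the difference of the maximin terms by $\|v - \overline{v}\|$. Writing $f(\pi_p, \pi_e) = \sum_{s_p^1} \pi_p(s_p^1) \cdot v\angled{s_p^1}(b_{\pi_e})$ and letting $g$ denote the analogous expression with $\overline{v}$ in place of $v$, I will invoke the standard nonexpansiveness of the maximin value in the supremum norm of its payoff, namely $|\max_{\pi_p} \min_{\pi_e} f - \max_{\pi_p} \min_{\pi_e} g| \le \sup_{\pi_p, \pi_e} |f - g|$. This follows from the one-line estimate $f \le g + \sup_{\pi_p,\pi_e}|f - g|$ propagated first through $\min_{\pi_e}$ and then through $\max_{\pi_p}$, together with the symmetric inequality obtained by swapping the roles of $f$ and $g$.

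To finish, I bound $\sup_{\pi_p,\pi_e} |f - g|$ pointwise. For each pair $(\pi_p, \pi_e)$ the difference $f - g = \sum_{s_p^1} \pi_p(s_p^1) \bigl( v\angled{s_p^1}(b_{\pi_e}) - \overline{v}\angled{s_p^1}(b_{\pi_e}) \bigr)$ is a convex combination of the terms $v\angled{s_p^1}(b_{\pi_e}) - \overline{v}\angled{s_p^1}(b_{\pi_e})$, the weights $\pi_p(s_p^1)$ summing to one. Each such term is bounded in absolute value by $\|v - \overline{v}\|$, because the max-norm takes the maximum over all pursuer positions and all beliefs, in particular over $s_p^1$ and the transformed belief $b_{\pi_e}$ from~\eqref{eq:belief}. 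Hence $\sup_{\pi_p,\pi_e}|f - g| \le \|v - \overline{v}\|$, and chaining the estimates gives $|Hv\angled{s_p^0}(b) - H\overline{v}\angled{s_p^0}(b)| \le \gamma \|v - \overline{v}\|$. Taking the maximum over $s_p^0$ and $b$ yields the claim.

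The only genuinely delicate point is the nonexpansiveness of the maximin value, so I would state explicitly beforehand the structural fact that makes it apply cleanly: the games compared under $v$ and $\overline{v}$ share the same strategy spaces (the simplex over $\adj(s_p^0)$ for $\pi_p$ and the evader maps $\pi_e$) and the same belief transformation $b_{\pi_e}$, so the two maximin payoffs differ only through $v$ versus $\overline{v}$ evaluated at \emph{identical} arguments. It is precisely this shared structure that collapses the difference of the two game values into a pointwise supremum of the difference of payoffs, rather than something larger; everything else is a routine chain of inequalities.
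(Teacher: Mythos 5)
Your proof is correct, and it rests on the same underlying mechanism as the paper's---nonexpansiveness of the value operations combined with the discount factor---but the decomposition is genuinely different. The paper factors the argument through auxiliary functions $Q^v_{\pi_p}\!\angled{s_p}$ (the value of committing to the one-step strategy $\pi_p$ under continuation values $v$): it first shows that $v \mapsto Q^v_{\pi_p}\!\angled{s_p}$ is a $\gamma$-contraction for each \emph{fixed} $\pi_p$, a step that only has to deal with the inner $\min_{\pi_e}$, and then handles the outer $\max_{\pi_p}$ by an optimizer-selection argument: assuming WLOG $(Hv)\angled{s_p}(b) \geq (H\overline{v})\angled{s_p}(b)$ and taking $\pi_p^*$, $\overline{\pi}_p$ to be the respective maximizers, it sandwiches $Q^{\overline{v}}_{\pi_p^*}\angled{s_p}(b) \leq Q^{\overline{v}}_{\overline{\pi}_p}\angled{s_p}(b) \leq Q^{v}_{\pi_p^*}\angled{s_p}(b)$ to reduce the difference of the two $H$-values to a difference of $Q$-values at the \emph{same} strategy $\pi_p^*$. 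You instead treat the maximin as a single nonexpansive operation, propagating $f \leq g + \sup_{\pi_p,\pi_e}|f-g|$ through $\min_{\pi_e}$ and then $\max_{\pi_p}$ symmetrically. The two routes are mathematically equivalent, but each buys something: your version is more compact and never needs the optima to be attained (the paper's optimizer-selection step implicitly assumes optimal $\pi_p^*$ and $\overline{\pi}_p$ exist), whereas the paper's detour through $Q^v_{\pi_p}$ establishes a per-strategy contraction for exactly the objects $Q^t_{\pi_p}$ that drive the algorithm of Section~\ref{sec:computing}, which is of independent use. A minor further difference: you dispose of the capture term by noting it cancels in the difference and bounding the remaining factor $\gamma\sum_{s_e \notin s_p^0} b(s_e) \leq \gamma$, while the paper asserts the difference is maximized at beliefs with $\sum_{s_e \in s_p} b(s_e) = 0$; your handling avoids having to justify that reduction.
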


\begin{theorem}
\label{thm:convergence}
There is a unique set of value functions $v^*$ satisfying $v^* = Hv^*$ and the recursive application of $H$ converges to $v^*$. Series $\braced{v^t}_{i=0}^{\infty}$ thus converges to value functions of an infinite horizon game.
\end{theorem}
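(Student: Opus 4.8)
The plan is to read this statement off Banach's fixed point theorem, whose two hypotheses are that the value functions live in a complete metric space and that $H$ is a contraction on it. The second hypothesis is precisely Lemma~\ref{lemma:contractivity}, so almost all the work consists of setting up the first hypothesis and then identifying the resulting abstract fixed point with the value of the infinite-horizon game.

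First I would fix the ambient space. Let $\calF$ denote the set of all functions assigning to each pursuer position $s_p^0 \in \calV^N$ and each belief $b \in \Delta(\calV)$ a value in $[0,1]$, continuous in $b$, equipped with the max-norm $\|\cdot\|$ defined above. Since there are finitely many positions $s_p^0$ and each simplex $\Delta(\calV)$ is compact, a continuous function attains its maximum and the ``$\max$'' in the norm is justified. This space is complete: a Cauchy sequence converges uniformly, a uniform limit of continuous functions is continuous, and a uniform limit of functions valued in the closed interval $[0,1]$ is again valued in $[0,1]$. The operator $H$ maps $\calF$ into itself, because by Theorem~\ref{thm:convexity} it produces PWLC (hence continuous) functions bounded in $[0,1]$.

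Second, I would invoke the two results. Lemma~\ref{lemma:contractivity} gives $\|Hv - H\overline{v}\| \leq \gamma \|v - \overline{v}\|$ with $\gamma < 1$, so $H : \calF \to \calF$ is a contraction on a complete metric space. Banach's fixed point theorem~\cite{ciesielski2007stefan} then yields a unique $v^*$ with $v^* = Hv^*$ and guarantees that the iterates $H^n v$ converge to $v^*$, geometrically with rate $\gamma$, from every starting point $v \in \calF$. Taking the horizon-$0$ value functions $v^0 \in \calF$ as the starting point and using $v^t = Hv^{t-1}$ from Theorem~\ref{thm:value-update}, we obtain $v^t = H^t v^0$, and therefore $\braced{v^t}_{t=0}^\infty$ converges to $v^*$.

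It remains to argue that $v^*$ is genuinely the value of the infinite-horizon game rather than merely an abstract fixpoint of $H$, and this is the step requiring the most care. Here I would exploit the discounting directly: every capture reward has the form $\gamma^\tau$ and the pursuer gains nothing once a capture has occurred, so the total contribution to any play's value from rounds after round $t$ is bounded by $\gamma^t$. Consequently the horizon-$t$ game value and the infinite-horizon game value differ by at most $\gamma^t$ at every $\angled{s_p^0}(b)$, so the finite-horizon values $v^t$ converge to the infinite-horizon value $v^\infty$ as $\gamma^t \to 0$. Because limits in the max-norm are unique, this forces $v^* = v^\infty$, which simultaneously establishes the characterization $v^* = Hv^*$ and the claim that $\braced{v^t}_{t=0}^\infty$ converges to the value functions of the infinite-horizon game.
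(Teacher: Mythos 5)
Your proposal is correct and takes essentially the same route as the paper: the paper's proof likewise consists of observing that $H$ is a contraction under the max-norm (Lemma~\ref{lemma:contractivity}) on a complete metric space of bounded functions and invoking Banach's fixed point theorem to obtain the unique fixed point $v^*$ and convergence of the iterates. If anything, you are more thorough than the paper, which neither spells out the completeness of the ambient space nor explicitly argues that the abstract fixed point coincides with the value of the infinite-horizon game (your discounting argument bounding the horizon-$t$ truncation error by $\gamma^t$ fills exactly that gap, in the spirit of Proposition~\ref{prop:convergence}).
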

\begin{proof}
The operator $H$ is a contraction mapping defined on a metric space of sets of bounded functions defined on the belief space. By applying Banach's fixed point theorem \cite{ciesielski2007stefan} we get that $H$ has a unique fixed point $v^*$ and the recursive application of $H$ converges to $v^*$.
\qed
\end{proof}

\begin{proposition}
\label{prop:convergence}
After $t$ iterations of the value iteration algorithm, the value function $v^t$ is $\gamma^t$-optimal (i.e. $\| v^t - v^* \| \leq \gamma^t$).
\end{proposition}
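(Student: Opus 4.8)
The plan is to combine the contraction property of $H$ (Lemma~\ref{lemma:contractivity}) with the fixed-point characterization $v^* = Hv^*$ (Theorem~\ref{thm:convergence}) in a straightforward induction on $t$, and then to control the base case using the fact that every value function takes values in $[0,1]$.

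First I would observe that the iterates satisfy $v^t = Hv^{t-1}$ while the fixed point satisfies $v^* = Hv^*$, so a single application of the contraction bound from Lemma~\ref{lemma:contractivity} gives
\begin{equation}
\| v^t - v^* \| = \| Hv^{t-1} - Hv^* \| \leq \gamma \, \| v^{t-1} - v^* \|.
\end{equation}
A short induction on $t$ then peels off one factor of $\gamma$ at each step down to the base function $v^0$, yielding $\| v^t - v^* \| \leq \gamma^t \, \| v^0 - v^* \|$.

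It remains only to bound the base term $\| v^0 - v^* \|$. Here I would invoke the definition of a value function, which fixes the codomain to be $[0,1]$: both the initial iterate $v^0$ and the fixed point $v^*$ map the belief simplex into $[0,1]$, since the pursuer's reward for any play is either $0$ or $\gamma^\tau \in (0,1]$. Consequently, for every $s_p^0 \in \calV^N$ and every $b \in \Delta(\calV)$ the pointwise difference $|v^0\angled{s_p^0}(b) - v^*\angled{s_p^0}(b)|$ is at most $1$, and taking the maximum over $s_p^0$ and $b$ gives $\| v^0 - v^* \| \leq 1$. Substituting this into the bound above yields $\| v^t - v^* \| \leq \gamma^t$, as claimed.

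The argument presents no genuine obstacle; it is the standard geometric-convergence consequence of a contraction toward its fixed point. The only step requiring any care is the base-case estimate, and even there the range restriction built into the definition of a value function makes $\| v^0 - v^* \| \leq 1$ immediate, so that no further analysis of $v^0$ is needed.
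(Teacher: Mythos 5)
Your proof is correct and takes essentially the same route as the paper's: both arguments bound the base term $\| v^0 - v^* \| \leq 1$ using the fact that value functions take values in $[0,1]$, and then apply the contraction factor $\gamma$ of the operator $H$ (Lemma~\ref{lemma:contractivity}) $t$ times to obtain $\| v^t - v^* \| \leq \gamma^t \| v^0 - v^* \| \leq \gamma^t$. If anything, your write-up is slightly more careful, since you make the induction explicit and note that \emph{both} $v^0$ and $v^*$ lie in $[0,1]$, whereas the paper only states $v^0 \geq 0$ and $v^* \leq 1$.
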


\section{Conclusion}
We present the first algorithm for solving the class of two-player discounted pursuit-evasion games with infinite horizon and partial observability, where the evader is assumed to be perfectly informed about the current state of the game (i.e. position of pursuer's units). 
This class of games has a significant relevance in security domains where a robust strategy that provides guarantees in the worst case is often desirable.

Our algorithm is a modification of the well-known value iteration algorithm for solving Partially Observable Markov Decision Processes (POMDPs), or stochastic games with concurrent moves.
We show that the strategies can be compactly represented using value functions that depend on the location of the pursuing units and the belief about the position of the evader, but not explicitly on the history of moves. 
These value functions are piecewise linear and convex and allow us to design a dynamic programming operator for the value iteration algorithm.

Our work is the first step towards many practical algorithms for solving discounted stochastic games with one-sided partial observability.
These can be applied in many scenarios requiring robust strategies and thus our work opens the whole new area of research in algorithmic and computational game theory.
One natural continuation is an adaptation of point-based approximation algorithms for POMDPs to improve the scalability of the value iteration algorithm.

\bibliographystyle{splncs03}
\bibliography{main}

\newpage
\appendix
\section*{APPENDIX}
\section{Proofs}

\subsection{Proof of Lemma~\ref{lemma:root_value}}
The expected reward in the root node $\uh(\emptyset)$ is
{\small
\begin{align*}
\uh(\emptyset) = \underbrace{\overbrace{\sum_{s_e^0 \in s_p^0} b(s_e^0)}^{\text{Evader is caught}} + \overbrace{\gamma \sum_{s_e^0 \not\in s_p^0} \sum_{s_p^1} \sum_{s_e^1 \in s_p^1} b(s_e^0) \cdot \pi_p(s_p^1) \cdot \pi_e(s_e^0,s_e^1)}^{\text{Evader is caught in the first round}}}_{\text{All terminal histories for the first round, i.e. shorter than two actions}} \ + \\
+ \underbrace{\sum_{s_e^0 \not\in s_p^0} \sum_{s_p^1} \sum_{s_e^1 \not\in s_p^1} b(s_e^0) \cdot \pi_p(s_p^1) \cdot \pi_e(s_e^0,s_e^1) \cdot \uh(s_e^0 s_p^1 s_e^1)}_{\text{Evader is not caught in the first round}}
\end{align*}
}
The derivation of Equation~\eqref{eq:root_value} is then just a technical derivation involving operations with sums and normalization of conditional probability distributions. Throughout the derivation we will use the equation of the transformed belief (Equation~\eqref{eq:belief}).
{\small
\begin{align*}
\uh(\emptyset) &= \sum_{s_e^0 \in s_p^0} b(s_e^0) + \gamma \sum_{s_e^0 \not\in s_p^0} \sum_{s_p^1} \sum_{s_e^1 \in s_p^1} b(s_e^0) \cdot \pi_p(s_p^1) \cdot \pi_e(s_e^0,s_e^1) \ \ + \\
& \qquad + \sum_{s_e^0 \not\in s_p^0} \sum_{s_p^1} \sum_{s_e^1 \not\in s_p^1} b(s_e^0) \cdot \pi_p(s_p^1) \cdot \pi_e(s_e^0,s_e^1) \cdot \uh(s_e^0 s_p^1 s_e^1) \\
&= \sum_{s_e^0 \in s_p^0} b(s_e^0) + \gamma \sum_{s_p^1} \pi_p(s_p^1) \sum_{s_e^1 \in s_p^1} \sum_{s_e^0 \not\in s_p^0} b(s_e^0) \cdot \pi_e(s_e^0,s_e^1) \ \ + \\
& \qquad + \sum_{s_p^1} \pi_p(s_p^1) \sum_{s_e^1 \not\in s_p^1} \sum_{s_e^0 \not\in s_p^0} b(s_e^0) \cdot \pi_e(s_e^0,s_e^1) \cdot \uh(s_e^0 s_p^1 s_e^1) \\
\end{align*}

\vspace{-5em}

\begin{align*}
&= \sum_{s_e^0 \in s_p^0} b(s_e^0) + \gamma \left[ \sum_{s_e^0 \not\in s_p^0} b(s_e^0) \right] \sum_{s_p^1} \pi_p(s_p^1) \sum_{s_e^1 \in s_p^1} \frac{\sum\limits_{s_e^0 \not\in s_p^0} b(s_e^0) \cdot \pi_e(s_e^0,s_e^1)}{\sum\limits_{s_e^0 \not\in s_p^0} b(s_e^0)} \ \ + \\
& + \sum_{s_p^1} \pi_p(s_p^1) \cdot \left[ \sum\limits_{\tilde{s}_e^1 \not\in s_p^1} \sum\limits_{\tilde{s}_e^0 \not\in s_p^0} b(s_e^0) \cdot \pi_e(s_e^0,s_e^1) \right] \sum_{s_e^1 \not\in s_p^1} \sum\limits_{s_e^0 \not\in s_p^0} \left[ \frac{b(s_e^0) \cdot \pi_e(s_e^0,s_e^1)}{\sum\limits_{\tilde{s}_e^1 \not\in s_p^1} \sum\limits_{\tilde{s}_e^0 \not\in s_p^0} b(s_e^0) \cdot \pi_e(s_e^0,s_e^1)} \cdot \uh(s_e^0 s_p^1 s_e^1) \right] \\
\end{align*}
\hfill

\begin{align*}
&= \sum_{s_e^0 \in s_p^0} b(s_e^0) + \left[ \sum_{s_e^0 \not\in s_p^0} b(s_e^0) \right] \Vast( \gamma \sum_{s_p^1} \pi_p(s_p^1) \sum_{s_e^1 \in s_p^1} \frac{\sum\limits_{s_e^0 \not\in s_p^0} b(s_e^0) \cdot \pi_e(s_e^0,s_e^1)}{\sum\limits_{s_e^0 \not\in s_p^0} b(s_e^0)} \ \ + \\
& \ + \sum_{s_p^1} \pi_p(s_p^1) \left[ \sum\limits_{\tilde{s}_e^1 \not\in s_p^1} \frac{\sum\limits_{\tilde{s}_e^0 \not\in s_p^0} b(s_e^0) \cdot \pi_e(s_e^0,s_e^1)}{\sum\limits_{s_e^0 \not\in s_p^0} b(s_e^0)} \right] 
\sum_{s_e^1 \not\in s_p^1} \sum\limits_{s_e^0 \not\in s_p^0} \left[ \frac{b(s_e^0) \cdot \pi_e(s_e^0,s_e^1)}{\sum\limits_{\tilde{s}_e^1 \not\in s_p^1} \sum\limits_{\tilde{s}_e^0 \not\in s_p^0} b(s_e^0) \cdot \pi_e(s_e^0,s_e^1)} \cdot \uh(s_e^0 s_p^1 s_e^1) \right] \Vast) \\
&= \sum_{s_e^0 \in s_p^0} b(s_e^0) + \left[ \sum_{s_e^0 \not\in s_p^0} b(s_e^0) \right] \Vast( \gamma \sum_{s_p^1} \pi_p(s_p^1) \sum_{s_e^1 \in s_p^1} b_{\pi_e}(s_e^1) \ \ + \\
& \ + \sum_{s_p^1} \pi_p(s_p^1) \left[ \sum\limits_{s_e^1 \not\in s_p^1} b_{\pi_e}(s_e^1) \right] \sum_{s_e^1 \not\in s_p^1} \sum\limits_{s_e^0 \not\in s_p^0} \left[ \frac{b(s_e^0) \cdot \pi_e(s_e^0,s_e^1)}{\sum\limits_{\tilde{s}_e^1 \not\in s_p^1} \sum\limits_{\tilde{s}_e^0 \not\in s_p^0} b(\tilde{s}_e^0) \cdot \pi_e(\tilde{s}_e^0,\tilde{s}_e^1)} \cdot \uh(s_e^0 s_p^1 s_e^1) \right] \Vast) \\
&= \sum_{s_e^0 \in s_p^0} b(s_e^0) + \left[ \sum_{s_e^0 \not\in s_p^0} b(s_e^0) \right] \cdot \sum_{s_p^1} \pi_p(s_p^1) \Vast( \gamma \sum_{s_e^1 \in s_p^1} b_{\pi_e}(s_e^1) \ \ + \\
& \quad + \left[ \sum\limits_{s_e^1 \not\in s_p^1} b_{\pi_e}(s_e^1) \right] \sum_{s_e^1 \not\in s_p^1} \sum\limits_{s_e^0 \not\in s_p^0} \left[ \frac{b(s_e^0) \cdot \pi_e(s_e^0,s_e^1)}{\sum\limits_{\tilde{s}_e^1 \not\in s_p^1} \sum\limits_{\tilde{s}_e^0 \not\in s_p^0} b(\tilde{s}_e^0) \cdot \pi_e(\tilde{s}_e^0,\tilde{s}_e^1)} \cdot \uh(s_e^0 s_p^1 s_e^1) \right] \Vast) \\
\end{align*}
}

\subsection{Proof of Lemma~\ref{lemma:subtree-elimination}}
\begin{proof}
Let $\calH_1$ be histories in the subtree of $n_1$, $\calH_2$ be histories in the subtree of $n_2$ and $\xi: \calH_1 \rightarrow \calH_2$ be a bijection from the definition of the game tree isomorphism.
Let $\sigma$ be a Nash equilibrium strategy profile in the game $G$ and $\sigma[n_1]$, $\sigma[n_2]$ be the optimal behaviors in subtrees of $n_1$ and $n_2$ when the strategy profile $\sigma$ is followed.

First of all, we show that we can modify the behavior in the subtree of $n_2$ to $\sigma'[n_2]$, where $\sigma'[n_2](n) = \sigma[n_1](\xi(n))$ without changing the expected utility (and hence we can assume that the strategies in both subtrees are the same).
Note that strategy profile $\sigma'[n_2]$ is valid in the subtree of $n_2$ due to the fact that subtrees beneath $n_1$ and $n_2$ are isomorphic.
Moreover this change does not affect behavior in the rest of the game, as $\sigma[n_1]$ was consistent with this behavior.

Assume that the strategy profile $\sigma'[n_2]$ is \emph{not} optimal in the subtree of $n_2$.
Strategies $\sigma[n_1]$ and $\sigma'[n_2]$ induce the same distribution over leaf nodes in subtrees of $n_1$ and $n_2$ (up to the bijection $\xi$) and these leaf nodes have the same utility values.
Hence if one of the players wanted to deviate from $\sigma'[n_2]$ in the subtree of $n_2$, he would have wanted to do the same in the case of $\sigma[n_1]$ used in the subtree of $n_1$.
Thus the strategy $\sigma'[n_2]$ must be an optimal behavior in the subtree of $n_2$.

Strategies $\sigma[n_1]$ and $\sigma'[n_2]$ induce the same distribution over the leafs in the respective subtrees (given that the node $n_1$, resp. $n_2$, was reached).
We construct a game $G'$ from $G$ by modifying the probability of reaching nodes $n_1$ and $n_2$ (the belief in information set $I$ of $G'$ is denoted $b'$).
The probability of reaching one of the terminal histories $u \in \calH_1$ and $\xi(u)$ in $G'$ remains the same as in $G$ as long as $b'[I](n_1)+b'[I](n_2)=b[I](n_1)+b[I](n_2)$.
The modification of the belief thus does not change the optimal behavior of the players, and thus does not change the value of the game.

\end{proof}

\subsection{Proof of Lemma~\ref{lemma:Qt}}
Assume that the pursuer played action $s_p^1$ in the first round of the game (drawn from $\pi_p$). The game moves to a shorter horizon game $G^{t-1}\!\angled{s_p^1,b_{\pi_e}}$ with the belief updated according to Equation~\eqref{eq:belief} (where $\pi_e$ is evader's best response to $\pi_p$). The optimal strategy $\sigma_p[s_p^1]$ of the pursuer in $G^{t-1}\!\angled{s_p^1,b_{\pi_e}}$ is optimal at belief $b_{\pi_e}$, hence, by definition, its corresponding $\alpha$-vector $\alpha[s_p^1]$ is present in the value function $v^{t-1}\!\angled{s_p^1}$ which expresses the value of all non-dominated strategies in the game $G^{t-1}\!\angled{s_p^1,b_{\pi_e}}$.

The value of every strategy is linear in belief (Lemma~\ref{lemma:v:mixed-strategy-linear}). It is therefore sufficient to define the value of the strategy $\sigma_p$ in each of the pure beliefs to form the $\alpha$-vector $\alpha^{\sigma_p}$. If the evader is located in the same vertex as one of the units of the pursuer, the game ends immediately and the pursuer gets utility $\gamma^0=1$. If the evader is not immediately caught, she chooses a vertex $s_e'$ adjacent to her current position $s_e$ so that the expected utility of the pursuer is minimized. We know that if the pursuer plays $s_p^1$ in the first round and follows with $\sigma_p[s_p^1]$ afterwards, the expected utility of the pursuer is represented by the $\alpha$-vector $\alpha_p[s_p^1]$ evaluated at the pure belief corresponding to the new position of the evader $s_e'$, multiplied by $\gamma$ as one round has already passed.

\subsection{Proof of Lemma~\ref{lemma:belief-finite}}
We show that if there is a belief point $b$, where the value function $\hat{v}^t\!\angled{s_p^0}$ can be improved, the value function $v^t\!\angled{s_p^0}$ can also be improved in at least one of the vertices of the facet on the surface of $v^t\!\angled{s_p^0}$.
We prove this by contradiction.
Let $F$ be the facet on the surface of $\hat{v}^t\!\angled{s_p^0}$ that the belief $b$ is projected on (i.e. $b$ is a convex combination of coordinates of vertices of $F$).
Assume that the value function $\hat{v}^t\!\angled{s_p^0}$ can be improved in $b$, but not in \emph{any} of the vertices of $F$.
This means that for every belief $b'$ corresponding to the extreme point of the facet $F$, the value function $\hat{v}^t\!\angled{s_p^0}$ is optimal, i.e. $\hat{v}^t\!\angled{s_p^0}(b') = v^t\!\angled{s_p^0}(b')$.
Moreover, as the value function can be improved at belief $b$, it holds that $\hat{v}^t\!\angled{s_p^0}(b) < v^t\!\angled{s_p^0}(b)$.
This means that the value of $v^t\!\angled{s_p^0}$ at $b$ is above facet $F$, as the original value $\hat{v}^t\!\angled{s_p^0}(b)$ was a convex combination of its vertices, which contradicts the convexity of the value function (Theorem~\ref{thm:convexity}).

\subsection{Proof of Lemma~\ref{lemma:contractivity}}
Let us define $Q^v_{\pi_p}\!\angled{s_p}$ similarly to $Q^t_{\pi_p}\!\angled{s_p}$ used in Section~\ref{sec:computing}, with the exception that we refer to an arbitrary value functions $v$ instead of $v^{t-1}$:
{\small
\begin{equation}
Q^v_{\pi_p}\!\angled{s_p}(b) \coloneqq \sum_{s_e \in s_p} b(s_e) + \gamma \left[ \sum_{s_e \in \calV \setminus s_p} b(s_e) \right] \cdot \min_{\pi_e} \sum_{s_p' \in \calV^N} \pi_p(s_p') \cdot v\angled{s_p'}(b_{\pi_e})
\end{equation}}
The proof will closely follow the structure of the proof of Theorem~1 in~\cite{smith2012point}. First of all we show that for every $s_p \in \calV^N$ and every valid pursuer's one-step strategy $\pi_p$, the mapping $v \mapsto Q^v_{\pi_p}\!\angled{s_p}$ has a contractivity factor $\gamma$, then by inspecting all possible $s_p$ and $\pi_p$ we show that the same holds for $H$.
Note that the difference $| Q^{v}_{\pi_p}\!\angled{s_p}(b) - Q^{\overline{v}}_{\pi_p}\!\angled{s_p}(b) |$ is maximized if $b$ is chosen so that the evader is not initially caught according to $b$ (i.e. $\sum_{s_e \in s_p} b(s_e) = 0$), which allows us to simplify the derivation.
{\small \begin{align*}
\left\| Q^{v}_{\pi_p}\!\angled{s_p} - Q^{\overline{v}}_{\pi_p}\!\angled{s_p} \right\| &= \max_b \Big| Q^{v}_{\pi_p}\!\angled{s_p}(b) - Q^{\overline{v}}_{\pi_p}\!\angled{s_p}(b) \Big| \\
&\hspace{-80pt} = \max_b \left| \gamma \min_{\pi_e} \textstyle\sum_{s_p'} \pi_p(s_p') \cdot v\angled{s_p'}(b_{\pi_e}) - \gamma \displaystyle\min_{\pi_e'} \textstyle\sum_{s_p'} \pi_p(s_p') \cdot \overline{v}\angled{s_p'}(b_{\pi_e'}) \right| \\
&\hspace{-80pt} = \gamma \max_b \left| \min_{\pi_e} \textstyle\sum_{s_p'} \pi_p(s_p') \cdot v\angled{s_p'}(b_{\pi_e}) - \displaystyle\min_{\pi_e'} \textstyle\sum_{s_p'} \pi_p(s_p') \cdot \overline{v}\angled{s_p'}(b_{\pi_e'}) \right| \\
&\hspace{-80pt} \leq \gamma \max_b \max_{\pi_e} \left| \textstyle\sum_{s_p'} \pi_p(s_p') \cdot v\angled{s_p'}(b_{\pi_e}) - \sum_{s_p'} \pi_p(s_p') \cdot \overline{v}\angled{s_p'}(b_{\pi_e}) \right| \\
&\hspace{-80pt} \leq \gamma \max_{b'} \left| \textstyle\sum_{s_p'} \pi_p(s_p') \cdot \left[ v\angled{s_p'}(b') - \overline{v}\angled{s_p'}(b') \right] \right| \\
&\hspace{-80pt} \leq \gamma \max_{b'} \textstyle\sum_{s_p'} \pi_p(s_p') \cdot \left| v\angled{s_p'}(b') - \overline{v}\angled{s_p'}(b') \right| \\
&\hspace{-80pt} \leq \gamma \max_{b'} \textstyle\sum_{s_p'} \pi_p(s_p') \cdot \| v - \overline{v} \| \\
&\hspace{-80pt} \leq \gamma \| v - \overline{v} \| \\
\end{align*}}

Let us now choose arbitrary $s_p \in \calV^N$ and $b \in \Delta(\calV)$. Without loss of generality, let us assume that $(Hv)\angled{s_p}(b) \geq (H\overline{v})\angled{s_p}(b)$. Let $\pi_p^*$ be an optimal one-shot strategy in $b$ w.r.t. $v$ (i.e. maximizing $Q^v_{\pi_p^*}\angled{s_p}(b)$) and $\overline{\pi}_p$ an optimal strategy w.r.t. $\overline{v}$. Note that $(Hv)\angled{s_p}(b)=Q^v_{\pi_p^*}\angled{s_p}(b)$ and $(H\overline{v})\angled{s_p}(b)=Q^{\overline{v}}_{\overline{\pi}_p}\angled{s_p}(b)$. It holds that $Q^{\overline{v}}_{\pi_p^*}\angled{s_p}(b) \leq Q^{\overline{v}}_{\overline{\pi}_p}\angled{s_p}(b) \leq Q^v_{\pi_p^*}\angled{s_p}(b)$. Then:
{\small \begin{align*}
\left| (Hv)\angled{s_p}(b) - (H\overline{v})\angled{s_p}(b) \right| &= \left| Q^v_{\pi_p^*}\angled{s_p}(b) - Q^{\overline{v}}_{\overline{\pi}_p}\angled{s_p}(b) \right| \\
&\leq \left| Q^v_{\pi_p^*}\angled{s_p}(b) - Q^{\overline{v}}_{\pi_p^*}\angled{s_p}(b) \right| \\
&\leq \max_{\pi_p} \left| Q^v_{\pi_p}\angled{s_p}(b) - Q^{\overline{v}}_{\pi_p}\angled{s_p}(b) \right| \\
&\leq \gamma \cdot \left\| v - \overline{v} \right\|
\end{align*}}

\subsection{Proof of Proposition~\ref{prop:convergence}}
The minimum reward the pursuer can get is zero, hence $v^0\!\angled{s_p^0}(b) \geq 0$ for every initial position of the pursuer $s_p^0$ and every his belief $b$ about the position of the evader. Similarly the maximum reward is one, hence $v^*\!\angled{s_p^0}(b) \leq 1$ for every $s_p^0$ and every $b$. It holds therefore that $\| v^0 - v^* \| \leq 1$. Due to the contractivity factor $\gamma$ of the dynamic operator $H$, after $t$ iterations the distance $\| v^t - v^* \| \leq \gamma^t \| v^0 - v^* \| \leq \gamma^t$ which completes the proof.

\section{Computing optimal strategies}
\subsection{Computing evader's strategy}
The idea behind the linear program for computing optimal evader's one-step strategy in game $G^t\!\angled{s_p^0,b}$ is similar to the one for solving the game from the perspective of the pursuer (Section~\ref{sec:lp}).
In this case the roles of the players are reversed --- the evader seeks for her strategy, while the pursuer best-responds it.
Similarly as in Section~\ref{sec:lp}, we focus on the case where $b(s_e^0)=0$ for all $s_e^0 \in s_p^0$.

By choosing the strategy $\pi_e$, the evader decides the belief $b_{\pi_e}$ in the shorter horizon game $G^{t-1}\!\angled{s_p^1,b_{\pi_e}}$ (see Equation~\eqref{eq:belief}).
The calculation of this updated belief is done by means of the Constraint~\eqref{lp:evader:belief}.
If the pursuer decides to move to the set of vertices $s_p^1$, his expected utility is described by the value function $v^{t-1}\!\angled{s_p^1}$ multiplied by $\gamma$ to account for the first round of the game. 
The pursuer best-responds by choosing the best $s_p^1$ from his perspective, i.e. the one that maximizes his utility, which is characterized by a set of best-response constraints (Equation~\eqref{lp:evader:br}).
Note that the maximization over $\alpha$-vectors in the value functions is rewritten using a set of inequality constraints, one for each $\alpha$-vector.
The evader then seeks for a strategy that minimizes the expected utility of the pursuer (Equation~\eqref{lp:evader:objective}).

{\small
\begin{align}
\min_{V,\pi_e,b_{\pi_e}} & \ \ \ \ V \label{lp:evader:objective}\\
\text{s.t.}\; & \gamma \sum_{s_e' \in \calV} \alpha(s_e') \cdot b_{\pi_e}(s_e') \leq V && \forall s_p^1 \in(s_p^0) \ \ \forall \alpha \in v^{t-1}\!\angled{s_p^1} \label{lp:evader:br}\\
& \sum_{s_e \in \calV \setminus s_p^0} \!\!\!\! b(s_e) \cdot \pi_e(s_e,s_e') = b_{\pi_e}(s_e') && \forall s_e' \in \calV \label{lp:evader:belief}\\
& \! \sum_{s_e' \in \adj(s_e)} \!\!\!\!\!\!\!\! \pi_e(s_e,s_e') = 1 && \forall s_e \in \calV\\
& \ \ \ \ \pi_e(s_e,s_e') \geq 0 && \forall s_e \in \calV \ \ \forall s_e' \in \adj(s_e)
\end{align}
}

\end{document}